\documentclass[twocolumn,pra,notitlepage,superscriptaddress,tightenlines,preprintnumbers]{revtex4}
% \linespread{2}
% \pagenumbering{arabic}
% \pagestyle{plain}
\usepackage{url}
\usepackage{braket}
\usepackage{graphicx}
\usepackage{multirow}
% \usepackage{vruler}
% \setvruler
\newcommand{\be}{\begin{equation}}
\newcommand{\ee}{\end{equation}}
\newcommand{\ba}{\begin{array}}
	\newcommand{\ea}{\end{array}}
\newcommand{\bea}{\begin{eqnarray}}
\newcommand{\eea}{\end{eqnarray}}

\usepackage{amsthm}
\usepackage{amsmath}
\usepackage{amssymb}
\usepackage{soul}
\usepackage{verbatim}
\usepackage{bm}
\usepackage{outlines}
\usepackage{siunitx}
\usepackage[space]{grffile}
\usepackage[colorlinks=true, linkcolor=blue, citecolor=gray]{hyperref}
\usepackage[capitalize]{cleveref}
\usepackage{xr}
\usepackage{color}
\usepackage{tikz}
\usetikzlibrary{calc, positioning}
\usepackage{mathtools}
\usepackage{algorithm}
\usepackage{algorithmic}
\newtheorem{theorem}{Theorem}

\newtheorem{lemma}[theorem]{Lemma}

\usepackage{url}
\usepackage{braket}
\usepackage{graphicx}
\usepackage{multirow}
\usepackage{amsmath}
\DeclareMathOperator{\tr}{Tr}
\usepackage{amsthm}
\usepackage{soul}
\usepackage{verbatim}
\usepackage{bm}
\usepackage{outlines}
\usepackage{siunitx}
\usepackage[space]{grffile}
\usepackage{color}
\usepackage{tikz}
\usetikzlibrary{calc, positioning}
\usepackage{mathtools}

\newcommand{\abs}[1]{\left|{#1}\right|}
\newcommand{\norm}[1]{\left|\left|{#1}\right|\right|}

\newcommand{\PTMLindblad}{L^\mathcal{L}}
\newcommand{\hcritical}{h_x^\mathrm{cr}}

\begin{document}
	
	\title{Spectral Quantum Tomography}
	\author{Jonas Helsen}
	\email{j.helsen@tudelft.nl}
	\affiliation{QuTech, Delft University of Technology, P.O. Box 5046, 2600 GA Delft, The Netherlands}
	
	\author{Francesco Battistel}
	\affiliation{QuTech, Delft University of Technology, P.O. Box 5046, 2600 GA Delft, The Netherlands}
	
	\author{Barbara M. Terhal}
	\affiliation{QuTech, Delft University of Technology, P.O. Box 5046, 2600 GA Delft, The Netherlands}
	\affiliation{JARA Institute for Quantum Information, Forschungszentrum Juelich, D-52425 Juelich, Germany}
	\date{\today}
	\begin{abstract}
		We introduce spectral quantum tomography, a simple method to extract the eigenvalues of a noisy few-qubit gate, represented by a trace-preserving superoperator, in a SPAM-resistant fashion, using low resources in terms of gate sequence length. The eigenvalues provide detailed gate information, supplementary to known gate-quality measures such as the gate fidelity, and can be used as a gate diagnostic tool. We apply our method to one- and two-qubit gates on two different superconducting systems available in the cloud, namely the QuTech Quantum Infinity and the IBM Quantum Experience. We discuss how cross-talk, leakage and non-Markovian errors affect the eigenvalue data.
	\end{abstract}
	\maketitle

	\section{Introduction}
	A central challenge on the path towards large-scale quantum computing is the engineering of high-quality quantum gates. To achieve this goal, many methods which accurately and reliably characterize quantum gates have been developed. Some of these methods are scalable, meaning that they require an effort which scales polynomially in the number of qubits on which the gates act. Scalable protocols, such as randomized benchmarking~\cite{knill+:RB, MGE:RB,magesan+:IRB,WG:leak,wallman+:unitarity, DHW:URB, erhard+:cycle, OWE:RB} necessarily give a partial characterization of the gate quality, for example an average gate fidelity. Other protocols such as robust tomography~\cite{kimmel+:tomo} or gate-set tomography~\cite{BK:GST,greenbaum:GST} trade scalability for a more detailed characterization of the gate. A desirable feature of all the above protocols is that they are resistant to state-preparation and measurement (SPAM) errors. The price of using SPAM-resistant (scalable) methods is that these protocols have significant experimental complexity and/or require assumptions on the underlying hardware to properly interpret their results.
	
	In this work we present spectral quantum tomography, a simple non-scalable method that extracts spectral information from noisy gates in a SPAM-resistant manner. 
	To process the tomographic data and obtain the spectrum of the noisy gate, we rely on the matrix-pencil technique, a well-known classical signal processing method. This technique has been advocated in~\cite{OWE:RB} in the context of randomized benchmarking and has also been used in~\cite{OTT:QPE} for processing data in the algorithm of quantum phase estimation. It has also been used, under the phrase `linear systems identification', in~\cite{BL:system} to predict the time evolution of quantum systems. While the matrix pencil technique leads to explicitly useful estimates of eigenvalues and their amplitudes, we note that the same underlying idea is used in the method of ``delayed vectors" which has been proposed in \cite{WP:dim} to assess the dimensionality of a quantum system from its dynamics. This ``delayed vectors" approach has been applied to assess leakage in superconducting devices in \cite{SDK:leak-witness}. 

	The spectral information of a noisy gate $\mathcal{S}$, which approximates some target unitary $U$, is given by the eigenvalues of the so-called Pauli transfer matrix representing ${\cal S}$. These eigenvalues, which are of the form $\lambda=\exp(-\gamma)\exp(i \phi)$, contain information about the quality of the implemented gate. Intuitively, the parameter $\gamma$ captures how much the noisy gate deviates from unitarity due to entanglement with an environment, while the angle $\phi$ can be compared to the rotation angles of the targeted gate $U$. Hence $\phi$ gives information about how much one over- or under-rotates.  The spectrum of~$\mathcal{S}$ can also be related to familiar gate-quality measures such as the average gate fidelity and the unitarity. Moreover, in the case of a noisy process modeled by a Lindblad equation, the spectrum can be easily related to the more familiar notions of relaxation and dephasing times. 

	The main advantage of spectral quantum tomography is its simplicity, requiring only the (repeated) application of a single noisy gate $\mathcal{S}$, as opposed to the application of a large set of gates as in randomized benchmarking, gate-set tomography and robust tomography. Naturally, simplicity and low-cost come with some drawback, namely the method does not give information about the eigenvectors of the noisy gate, such as the axis around which one is rotating. However, information about the eigenvectors is intrinsically hard to extract in a SPAM-resistant fashion since SPAM errors can lead to additional rotations \cite{RPZ:gauge}.
	Another feature of spectral quantum tomography is that it can be used to extract signatures of non-Markovianity, namely the phenomenon where the noisy gate $\mathcal{S}$ depends on the context in which it is applied (e.g. time of application, whether any gates have been applied before it). As we show in this paper, our method can be used to detect various types of non-Markovian effects such as coherent revivals, parameter drifts, and Gaussian-distributed time-correlated noise. It is also possible to distinguish non-Markovian effects from qubit leakage.
	For these reasons we believe that spectral quantum tomography adds a useful new tool to the gate-characterization toolkit. The method could also have future applications in assessing the performance of logical gates in a manner which is free of logical state preparation and measurement errors, see the Discussion Section.

	\section{Results}

	\subsection{Eigenvalues of Trace-Preserving Completely-Positive (TPCP) maps}
	\label{sec:eig}
	Take a unitary gate $U$ on a $d$-dimensional space with $U \ket{\psi_j}=e^{i \phi_j}\ket{\psi_j}$. The corresponding TPCP map ${\cal S}_U(\rho)=U \rho U^{\dagger}$ has one trace-full eigenvector, namely~$I$ with eigenvalue~1, as well as $d^2-1$ traceless eigenvectors. In particular, there are $d^2-d$ traceless eigenvectors of the form $\ket{\psi_j}\bra{\psi_l}$ for $j\neq l$ with eigenvalues $\exp(i (\phi_j-\phi_l))$, and $d-1$ traceless eigenvectors of the form $\ket{\psi_1}\bra{\psi_1}-\ket{\psi_j}\bra{\psi_j}$ for $j=2,\ldots, d$ with eigenvalue~$1$.

	For general TPCP maps it is convenient to use the Pauli transfer matrix formalism.
	For an $n$-qubit system ($d=2^n$) consider the normalized set of Pauli matrices $P_{\mu}$ for $\mu=0,\ldots, N$ with $N+1=4^n=d^2$, where $P_0=I/\sqrt{2^n}$ and the normalization is chosen such that ${\rm Tr}\big[ P_{\mu} P_{\nu}\big]=\delta_{\mu \nu}$. For a TPCP map ${\cal S}$ acting on $n$ qubits, the Pauli transfer matrix is then defined as
	\begin{equation}
	S_{\mu \nu}={\rm Tr}\big[P_\mu {\cal S}(P_{\nu})\big],\;\; \mu,\nu=0,\ldots, N.
	\label{def:Pauli}
	\end{equation}
	The form of the Pauli transfer matrix $S$ is~\cite{RSW:TCP}
	\begin{equation}
	{\cal S} \leftrightarrow S=\left(\begin{array}{cc} 1 & 0 \\
	{\bf s} & T^{\cal S} \\ 
	\end{array}\right),
	\label{eq:superop}
	\end{equation}
	where $T^{\cal S}$ is a real $N\times N$ matrix and ${\bf s}$ is a $N$-dimensional column vector. The 1 and 0's in the top row of the Pauli transfer matrix are due to the fact that ${\cal S}$ is trace-preserving. For a unital ${\cal S}$ which obeys ${\cal S}(I)=I$, the vector ${\bf s}=0$. 
	
	A few properties are known of the eigenvalue-eigenvector pairs of $S$, i.e.~the pairs $(\lambda, \vec{\bf v})$ with $S{\bf v}=\lambda {\bf v}$: 
	\begin{itemize}
		\item The eigenvalues of $S$ are $1 $ and the eigenvalues of $T^{\cal S}$ since the solutions of the equation ${\rm det}(S-\lambda I)=0$ are the solutions of the equation $(1-\lambda){\rm det}(T^{\cal S}-\lambda I)=0$.
		\item The eigenvalues of $S$, and thus the eigenvalues of $T^{\cal S}$, come in complex-conjugate pairs. This is true because $T^{\cal S}$ is a real matrix.
		\item The eigenvalues of $T^{\cal S}$ (or $S$ for that matter) have modulus less than 1, i.e.~$|\lambda| \leq 1$ (see e.g.~Proposition 6.1 in~\cite{wolf}). 
	\end{itemize}
	
	If $T^{\cal S}$ is {\em diagonalizable} as a matrix, it holds that $T^{\cal S}=V D V^{-1}$ where $D$ is a diagonal matrix and $V$ a similarity transformation. Generically, $T^{\cal S}$ will be diagonalizable, in which case there are $N$ eigenvalue-eigenvector pairs for $T$. A sufficient condition for diagonizability is, for example, that all the eigenvalues of $T^{\cal S}$ are distinct.
	In Appendix \ref{sec:nondiag} we give examples and discuss what it means if $T^{\cal S}$ is not diagonalizable.
	
	For some simple single-qubit channels we can explicitly compute the spectrum. For instance, for a single-qubit depolarizing channel with depolarizing probability $p$, the eigenvalues of the sub-matrix $T^{\cal S}$ of the Pauli transfer matrix are $\{1-p, 1-p, 1-p\}$. For a single qubit amplitude-damping channel with damping rate $p$ they are $\{\sqrt{1-p},\sqrt{1-p},1-p\}$~\cite{greenbaum:GST}.
	
	\subsubsection{Relation to gate-quality measures}
	\label{sec:gate_quality_bounds}
	The eigenvalues of the Pauli transfer matrix of a noisy gate ${\cal S}$ can be related to several other known measures of gate quality such as the average gate fidelity ${\cal F}(\mathcal{S},U)$, the gate unitarity $u({\cal S})$ and, for a single qubit ($n=1$), the gate unitality.
	
	The average gate fidelity is defined as ${\cal F}(\mathcal{S},U)=\int d\phi \bra{\phi} U^{\dagger} {\cal S}(\ket{\phi}\!\bra{\phi}) U \ket{\phi}$. This fidelity relates directly to the entanglement fidelity ${\cal F}_{\rm ent}(\mathcal{S},U)$ via ${\cal F}=\frac{{\cal F}_{\rm ent}d+1}{d+1}$~\cite{HHH:singlet}, where the entanglement fidelity is defined as
	\begin{equation*}
	{\cal F}_{\rm ent}(\mathcal{S},U)={\rm Tr} \big[I \otimes U \ket{\Psi}\!\bra{\Psi} I \otimes U^{\dagger}(I\otimes {\cal S})(\ket{\Psi}\!\bra{\Psi})\big],
	\end{equation*}
	where $\ket{\Psi} = \frac{1}{\sqrt{d}}\sum_{i=1}^d\ket{i,i}$ is a maximally entangled state.
	Using that $\ket{\Psi}\bra{\Psi}=\frac{1}{d} \sum_{\mu=0}^{N} P_{\mu} \!\otimes\! P_{\mu}$  and $U P_{\mu} U^\dagger=\sum_{\kappa} T_{\mu \kappa}^{U^\dagger} P_{\kappa}$ we can write
	\begin{equation*}
	{\cal F}_{\rm ent}(\mathcal{S},U)=\frac{1}{d^2} \sum_{\mu} {\rm Tr}\big[U P_{\mu} U^{\dagger} {\cal S}(P_{\mu})\big]=\frac{1}{d^2}\big(1+\tr \big[T^{U^\dagger}T^{\cal S} \big]\big).
	\end{equation*}
	Thus for the (entanglement) fidelity of a noisy gate ${\cal S}$ with respect to the identity channel $U=I$, one has ${\cal F}_{\rm ent}(\mathcal{S},I)=\frac{1}{d^2}(1+\sum_i \lambda_i)$, implying a direct relation to the spectrum $\{\lambda_i\}$ of $T^{\cal S}$. A more interesting relation is how the eigenvalues of $T^{\cal S}$ bound the fidelity with respect to a targeted gate $U$. In Appendix \ref{sec:fidel} we prove that the entanglement fidelity
	can be upper bounded as
	\begin{equation}
	{\cal F}_{\rm ent}({\cal S}, U)\!\leq\! \frac{1}{d^2}\!\!\left[\!1\!+\!(d^2\!-\!1)\, \!
	\!\!\left(\sqrt{1-\frac{\sum_j |\lambda_j|^2}{d^2-1}}\!+\!\xi_{\rm max} \!\right)\!\right],
	\end{equation}
	where $\xi_{\rm max} = \frac{1}{d^2-1}|\sum_j\lambda^{\rm ideal}_j\lambda_j^*|$ with $\lambda^{\rm ideal}_j$ the eigenvalues of $T^{U}$ with $U$ the targeted unitary, ordered such that the sum $|\sum_j\lambda^{\rm ideal}_j\lambda_j^*|$ is maximal.
	
	This upper bound is not particularly tight, but for the case of a single qubit we can make a much stronger numerical statement, see~\cref{sec:fidel}. 

Another measure of gate quality, namely the unitarity or the coherence of a channel~\cite{wallman+:unitarity} on a $d$-dimensional system, is defined as 
	\begin{equation}
	u({\cal S})=\frac{d}{d-1}\int d\phi \;{\rm Tr}\big[[{\cal S}'(\ket{\phi}\!\bra{\phi})]^{\dagger} {\cal S}'(\ket{\phi}\!\bra{\phi})\big],
	\end{equation}
	where $\mathcal{S}'(\rho)\coloneqq \mathcal{S}(\rho)-\tr [\mathcal{S}(\rho)] I/\sqrt{d}$. 
	A more convenient but equivalent definition is
	\begin{equation}
	u({\cal S})=\frac{1}{d^2-1} {\rm Tr}\big[{T^{\mathcal{S}}}^\dagger{T^{\mathcal{S}}} \big] =\frac{1}{d^2-1} \sum_{i}\sigma_i(T^{\mathcal{S}})^2,
	\end{equation}
	where $\{\sigma_i\}$ are the singular values of the matrix $T^{\mathcal{S}}$.

	The unitarity captures how close the channel is to a unitary gate. A lower bound on the unitarity is given by Proposition 2 in~\cite{RPZ:gauge}:
	\begin{equation}
	u({\cal S}) \ge \frac{1+\sum_{i=1}^{d^2-1} \abs{\lambda_i}^2-d}{d(d-1)},
	\end{equation}
	where $\{\lambda_i\}$ are the eigenvalues of $T^{\mathcal{S}}$. For a single qubit, an upper bound on the unitarity can also be given in terms of a non-convex optimization problem, see~\cref{sec:fidel}.
	
	The unitality of a TPCP map is defined as $1-\norm{\mathbf{s}}^2$ with ${\bf s}$ in Eq.~(\ref{eq:superop}). Specifically, for single-qubit channels one can derive the bound~\cite{RPZ:gauge}
	\begin{equation}
	\norm{\mathbf{s}}^2 \le 1-\abs{\lambda_1}^2 -\abs{\lambda_2}^2 - \abs{\lambda_3}^2 + 2\lambda_1\lambda_2\lambda_3.
	\end{equation}
	
	\subsubsection{Relation to relaxation and dephasing times}
	\label{sec:lind}
	
	We consider the eigenvalues of a superoperator induced by a simple Lindblad equation modeling relaxation and decoherence of a driven qubit, as an example.
	We have a Lindblad equation with time-independent Lindbladian $\mathcal{L}$:
	\begin{equation}\label{eq:lindblad}
	\dot{\rho}=\mathcal{L}(\rho).
	\end{equation}
	The formal solution of~\cref{eq:lindblad} is given by $\rho(t)=e^{t\mathcal{L}}(\rho(t=0))$, where $e^{t\mathcal{L}}$ is a TPCP map for every $t$. We are interested in the total evolution after a certain gate time $\tau$ and set $\mathcal{S}_{\tau}=e^{\tau\mathcal{L}}$. We assume a simple model in which a qubit evolves according to a Hamiltonian $H=(h_x X + h_y Y + h_z Z)/2$ and is subject to relaxation and pure dephasing processes, according to the Lindbladian:
	\begin{equation*}
	\mathcal{L}(\rho)= -i[H,\rho] + \frac{1}{T_1} \Bigl(\sigma_-\rho\sigma_+-\frac{1}{2}\{\sigma_+\sigma_-,\rho\}\Bigr) + \frac{1}{2T_\phi}(Z\rho Z-\rho).
	\end{equation*}
	We define the relaxation respectively dephasing rates $\Gamma_1=1/T_1$ and $\Gamma_2=1/T_2=1/(2T_1)+1/T_\phi$. The Pauli transfer matrix $\PTMLindblad$ of $\mathcal{L}$ then takes the form
	\begin{equation}
	\PTMLindblad=\begin{pmatrix}
	0   & 0   & 0   & 0   \\
	0   & -\Gamma_2   & h_z   & h_y   \\
	0   & -h_z   & -\Gamma_2   & h_x   \\
	\Gamma_1   & -h_y   & -h_x   & -\Gamma_1   
	\end{pmatrix}.
	\label{eq:L}
	\end{equation}
	We will denote the eigenvalues of $\PTMLindblad$ by $\Omega_j$ for $j \in \{0, \ldots,3\}$ and the eigenvalues of $\mathcal{S}_\tau$ by $\lambda_j$ for $j \in \{0, \ldots,3\}$.
	As expected, $\Omega_0=0$ implying that $\lambda_0=e^0=1$ is an eigenvalue of $\mathcal{S}_\tau$. The other three eigenvalues of $\PTMLindblad$ can be found from the $3\times 3$ sub-matrix in the lower-right corner. Here we consider some simple cases.
	
	\emph{Case 1: $h_x=h_y=h_z=0$}. In this case, for $j=1,2,3$ the three eigenvalues of ${\mathcal L}$ and ${\cal S}_{\tau}$ are clearly
	\begin{gather*}
	\Omega_j \in \{-\Gamma_2, -\Gamma_2, -\Gamma_1\},\\
	\lambda_j \in \{e^{-\Gamma_2 \tau}, e^{-\Gamma_2 \tau}, e^{-\Gamma_1 \tau}\},
	\end{gather*}
	thus relating directly to the relaxation and dephasing rates.
	
	\emph{Case 2: $h_x=h_y=0$}. In this case we have
	\begin{gather*}
	\Omega_j \in \{-\Gamma_2+ih_z, -\Gamma_2-ih_z, -\Gamma_1\},\\
	\lambda_j \in \{e^{-\Gamma_2 \tau}e^{ih_z \tau}, e^{-\Gamma_2 \tau} e^{-ih_z \tau}, e^{-\Gamma_1 \tau}\},
	\end{gather*}
	where we have separated the decaying part of the $\lambda_j$ (corresponding to the real part of the $\Omega_j$) and their phases (corresponding to the imaginary part). If we work in the rotating frame of the qubit, $h_z$ can be understood as an over-rotation along the $Z$-axis, which would appear in the spectrum as an extra phase imparted to two of the eigenvalues. Again we see that the decaying part of the eigenvalues directly relates to the relaxation and dephasing rates.
	
	\emph{Case 3: $h_y=h_z=0$}. This case shows that over-rotations can also modify the decay strength of the eigenvalues. We analyze the eigenvalues as a function of $h_x$. From $\PTMLindblad$ in Eq.~(\ref{eq:L}) we see that $\Omega_1(h_x)=-\Gamma_2$ for all $h_x$. For the other eigenvalues we have
	\begin{equation}
	\Omega_{2,3}(h_x) = -\frac{1}{2} \Bigl( \Gamma_1+\Gamma_2 \pm \sqrt{(\Gamma_1-\Gamma_2)^2-4h_x^2} \Bigr).
	\end{equation}
	We see that if $|h_x|<|\Gamma_1-\Gamma_2|/2\equiv \hcritical$, only the moduli of $\lambda_2$ and $\lambda_3$ are affected as compared to \emph{Case 1}, in other words, $\lambda_2$ and $\lambda_3$ only decay with no extra phases. On the contrary, the phases of these eigenvalues becomes non-zero when the driving is sufficiently strong: $|h_x|>\hcritical$. It implies that if we look at the dynamics induced by the Lindblad equation, real oscillations, not only decay, will be present as a function of $\tau$. Hence these two scenarios represent respectively the overdamped ($|h_x| < \hcritical$) and underdamped regime ($|h_x| > \hcritical$), similar to the dynamics of a vacuum-damped qubit-oscillator system, see e.g.~\cite{book:HR}. At $|h_x|=\hcritical$, the system is critically damped and $\PTMLindblad$ does not have $4$ linearly-independent eigenvectors, meaning that the Pauli transfer matrix of ${\cal S}_{\tau}$ is not diagonalizable. In this case the dynamics also has a linear dependence on $t$ besides the exponential decay with $t$, see the discussion in Appendix \ref{sec:nondiag}.

	\subsection{Spectral tomography}
	\label{sec:stm}
	
	In this section we describe the spectral tomography method, which estimates the eigenvalues of ${\cal S}$, where ${\cal S}$ is a TPCP implementation of a targeted unitary gate. 
	
	We model state-preparation errors as a perfect preparation step followed by an unknown TPCP map ${\cal N}_{\rm prep}$. Similarly, measurement errors are modeled by a perfect measurement preceded by an unknown TPCP map ${\cal N}_{\rm meas}$. We assume that when we apply the targeted gate $k$ times, an accurate model of the resulting noisy dynamics is ${\cal S}^k$. 

The spectral tomography method can be applied without this assumption but the interpretation of its results is more difficult, see Section \ref{sec:nm} for a discussion.

	The method works by constructing the following {\em signal} function, for $k=0,1,\ldots, K$ for some fixed $K$:
	\begin{equation}
	g(k)=\sum_{\mu=1}^{N}{\rm Tr}\big[ P_{\mu}{\cal N}_{\rm meas} \circ {\cal S}^k \circ {\cal N}_{\rm prep}(P_{\mu})\big].
	\label{eq:coeff}
	\end{equation}
	Gathering the data to estimate $g(k)$ requires (1) picking a traceless $n$-qubit Pauli $P_{\mu}$, (2) preparing an $n$-qubit input state in one of the $2^n$ basis states corresponding to this chosen Pauli, (3) applying the gate $k$ times and measuring in the same chosen Pauli basis, and (4) repeating (1-3) over different Pauli's, basis states and experiments to get good statistics. As in standard process tomography~\cite{Nie00}, one takes linear combinations of the estimated probabilities for the outcomes to construct an estimator of a Pauli operator on a Pauli input. This gives an estimate of $g(k)$ for a fixed $k$.
	Repeating this process for $k\in \{0,\ldots, K\}$ we reconstruct the entire signal function. In Section \ref{sec:resources} we discuss the cost of doing these experiments as compared to randomized benchmarking.
	
	Let us now examine how $g(k)$ depends on the eigenvalues of the matrix $T$. When there are no SPAM errors, that is, ${\cal N}_{\rm meas}$ and ${\cal N}_{\rm prep}$ are identity channels, we have 
	\begin{equation}
	g^{\mbox{\tiny NO SPAM}}(k)=\sum_{\mu=1}^{N} (T^k)_{\mu \mu}={\rm Tr}\big[T^k\big]=\sum_{j=1}^N \lambda_j^k,
	\label{eq:nospam}
	\end{equation}
	where $\{\lambda_j\}$ are the eigenvalues of $T$. The last step in this equality follows directly when $T$ is diagonalizable, but it can alternatively be proved using the so-called Schur triangular form of $T$ (we give this proof in Appendix  \ref{sec:nondiag}).
	
	When ${\cal N}_{\rm meas}$ and ${\cal N}_{\rm prep}$ are not identity channels, we have
	\begin{equation}
	g(k)={\rm Tr}\big[T_{\rm meas} T^k T_{\rm prep}\big]={\rm Tr} \big[A_{\rm SPAM} D^k\big]=\sum_{j=1}^N A_j \lambda_j^k,
	\label{eq:defgk}
	\end{equation}
	where $T_{\rm meas}$ and $T_{\rm prep}$ are respectively the $T$-submatrices of the Pauli transfer matrix of ${\cal N}_{\rm meas}$ and ${\cal N}_{\rm prep}$. Here we assume that $T = VDV^{-1}$ is diagonalizable and the matrix $A_{\rm SPAM}=V^{-1} T_{\rm prep} T_{\rm meas} V$ captures the SPAM errors. One may expect that $A_{\rm SPAM}$ is close to the identity matrix in the typical case of low SPAM errors, in particular one may expect that $A_j \neq 0$ for all $j$ so that all eigenvalues of $T$ are present in the signal $g(k)$.
	
	In principle, one could take more tomographic data and consider a full matrix-valued signal $c_{\mu \nu}(k)={\rm Tr}\big[P_{\mu}{\cal N}_{\rm meas} \circ {\cal S}^k \circ {\cal N}_{\rm prep}(P_{\nu})\big]$ instead of only Eq.~(\ref{eq:coeff}). This requires doing many more experiments and there is no clear advantage in terms of the ability to determine the spectrum.

	\subsubsection{Signal analysis or matrix-pencil method for extracting eigenvalues}
	\label{sec:prony}
	
	In this section we review the classical signal-processing method which reconstructs, from the (noisy) signal 
	$g(k)=\sum_{j=1}^N A_j \lambda_j^k $ for $k=0, \ldots ,K$, an estimate for the eigenvalues $\lambda_j$ and the amplitudes $A_j$.  Note that we have $g(k) \in \mathbb{R}$ due to Eq.~(\ref{eq:coeff}).
	Not surprisingly, this signal-processing method has been employed and reinvented in a variety of scientific fields. We implement the so-called ESPRIT analysis described in~\cite{SP:pencil}, but see also~\cite{pt:prony}. 
	In the context of spectral tomography we know that the signal $g(k)$ will in principle contain $N$ eigenvalues (which are possibly degenerate). However, we can vary the number of eigenvalues we use to fit the signal to see whether a different choice than $N$ gives a significantly better fit. This is relevant in particular when the implemented gate contains leakage or non-Markovian dynamics, see Section \ref{sec:nm}. 
%FIGURE 1
	\begin{figure*}[ht]
		\includegraphics[width=1.1\columnwidth]{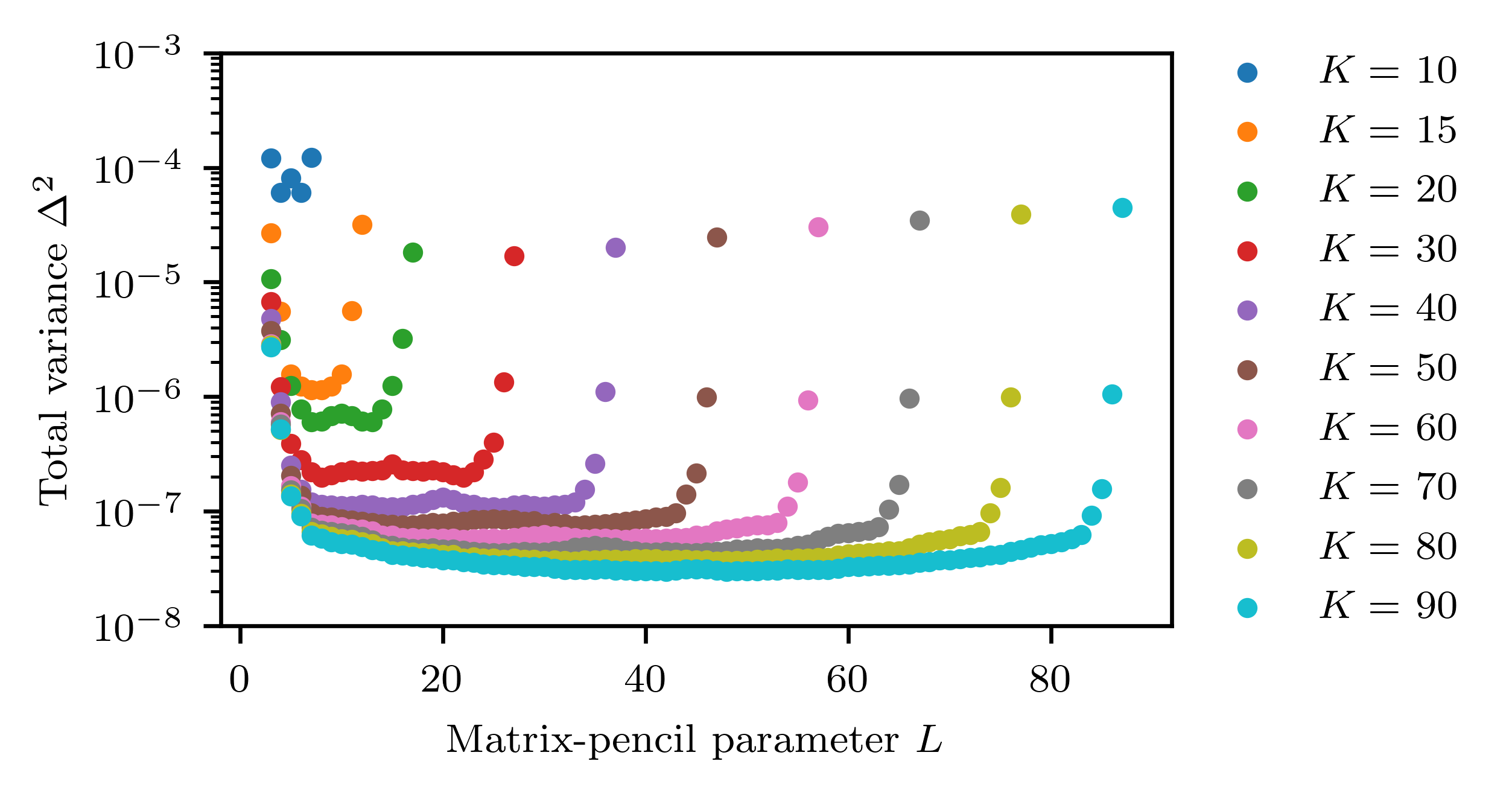}\hspace{5mm}
		\includegraphics[width=0.88\columnwidth]{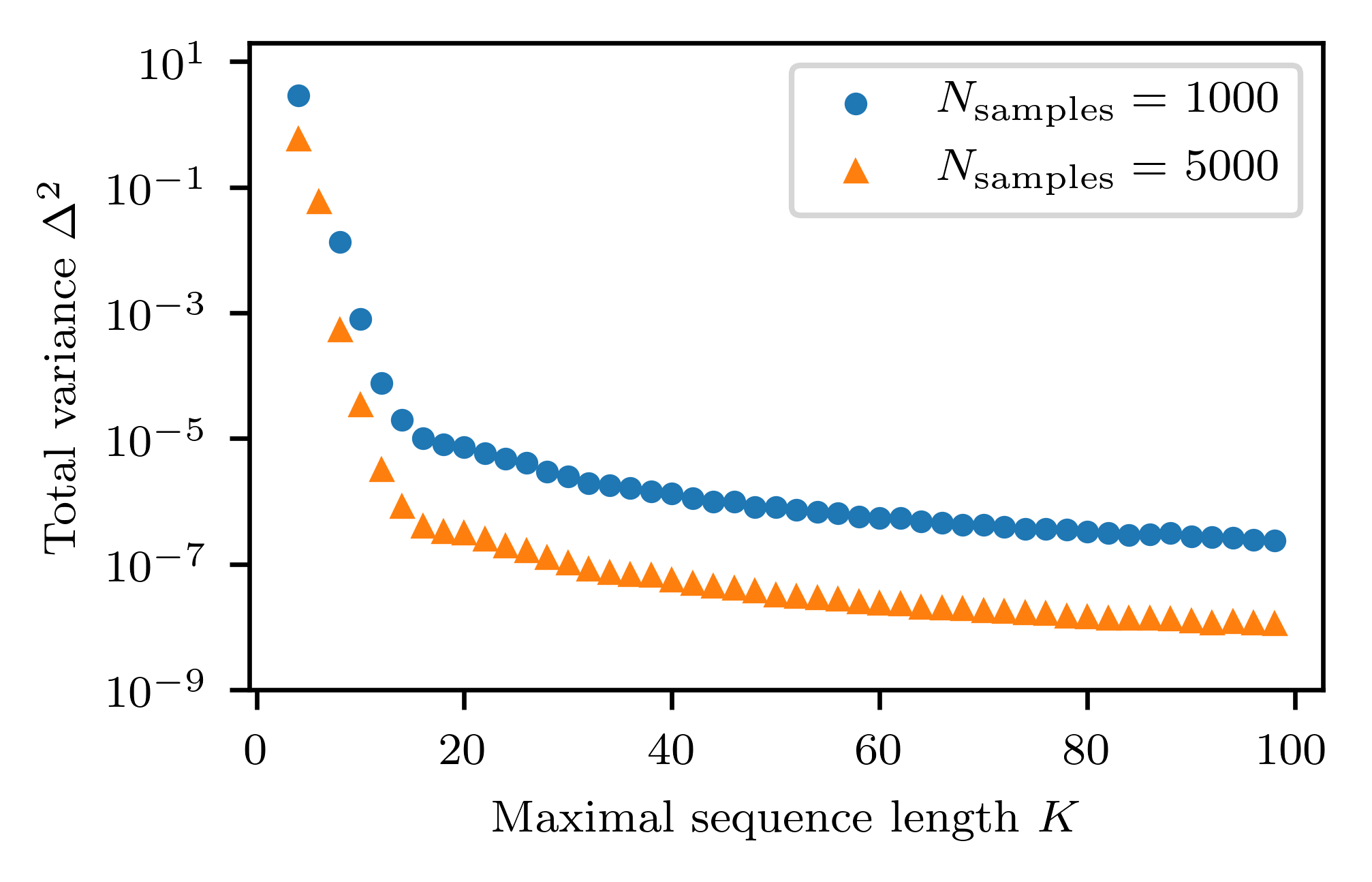}
		\caption{\label{fig:pencil-test}  Preliminary study of the numerical accuracy of the matrix-pencil method as a function of $L$, $K$ and $N_{\rm samples}$. {\bf (Left)} We use the matrix-pencil method with different $L$'s and $K$'s to estimate the eigenvalues of a random single-qubit channel, for $N_{\rm samples}=1000$. On the vertical axis we give the variance in the estimate of the eigenvalues: $\Delta^2=\frac{1}{3}(\sum_{j=1}^{N=3} |\lambda_j-\lambda_j^{\rm est}|^2)$. We see that, as long as the matrix-pencil parameter $L$ is chosen away from $0$ or $K$, the accuracy of the reconstructed signal is nearly independent of $L$. Furthermore, we see that higher $K$'s can achieve a lower $\Delta^2$. {\bf (Right)} We generate a random single-qubit channel and set $L=K/2$. We plot $\Delta^2$ as a function of $K$  for two different values of $N_{\rm samples}=1000$ and $N_{\rm samples}=5000$, showing how a larger $N_{\rm samples}$ suppresses the total variance. We see that for constant $N_{\rm samples}$ the accuracy of the method increases rapidly at first when $K$ is increased, but it increases more slowly if $K$ is already large. This can be explained by the fact that the signal decreases exponentially in $K$ and so data points for large $K$ have much lower signal-to-noise ratio. For both figures, random channels were generated using QuTip's random TPCP map functionality, and measurement noise was approximated by additive Gaussian noise with standard deviation equal to $1/\sqrt{N_{\rm samples}}$. }
		\label{fig:varyL}
	\end{figure*}

	We require at least $K\geq 2N-2$ in order to determine the eigenvalues accurately. This implies that for a single-qubit gate with $N=3$ we need at least $K=4$ and for a two-qubit gate with $N=15$ we need at least $K= 28$. However, the signal $g(k)$ has sampling noise due to a bounded $N_{\rm samples}$ and in practice it is good to choose $K$ larger than strictly necessary to make the reconstruction more robust against noise. We study the effect of varying $K$ in Fig.~\ref{fig:varyL} (left panel).
	
	The method goes as follows and relies on picking a so-called pencil parameter $L$.

Let us assume for now that each $g(k)$ is learned without sampling noise. One constructs a 
	$(K-L+1) \times (L+1)$-dimensional data matrix $Y$ as 
	\begin{align}
	Y&=\left(\begin{array}{cccc} g(0) & g(1) & \ldots & g(L) \\
	g(1) & g(2) & \ldots & g(L+1) \\
	g(2) & \vdots && \vdots \\
	\vdots &  & & \vdots \\
	g(K-L) &\ldots & \ldots & g(K)
	\end{array}\right) \nonumber \\
	&=  \sum_{j=1}^N A_j 
	\left(\begin{array}{cccc} 1 & \lambda_j & \ldots & \lambda_j^L \\
	\lambda_j & \lambda_j^2& \ldots & \lambda_j^{L+1} \\
	\lambda_j^2 & \vdots && \vdots \\
	\vdots & & & \vdots \\
	\lambda_j^{K-L} &\ldots & \ldots & \lambda_j^K
	\end{array}\right).
	\end{align}
	Note that ${\rm rank}(Y)\leq N$ since $Y$ is a sum of at most $N$ rank-1 matrices when there are $N$ eigenvalues. Consider two submatrices of $Y$: the matrix $G_0$ is obtained from $Y$ by deleting the last column of $Y$, while the matrix $G_1$ is obtained by deleting the first column of $Y$. When $L=\frac{K}{2}$, the matrices $G_0$ and $G_1$ are square matrices of dimension $M=\frac{K}{2}+1$. For this choice of $L$, the smallest value of $K$ so that $M=N$ is $2N-2$. We seek a time-shift matrix $\mathfrak{T}$ such that $\mathfrak{T}G_0=G_1$. When $M\geq N$, there certainly exists a matrix $\mathfrak{T}$ such that for all $j\in\{1,\ldots,N\}$:
	\begin{equation}
	\mathfrak{T}\left(\begin{array}{c} 1\\ \lambda_j \\ \vdots \\ \lambda_j^M \end{array}\right)=\lambda_j \left(\begin{array}{c} 1 \\ \lambda_j \\ \vdots \\ \lambda_j^M \end{array}\right).
	\end{equation}
	Furthermore, if $G_0^{-1}$ exists, which is the case when ${\rm rank}(G_0)=M$, this matrix $\mathfrak{T}$ will be uniquely given as $G_1 G_0^{-1}$. Hence, in this case there is a unique matrix $\mathfrak{T}$, obtained by constructing $G_1 G_0^{-1}$ from the data, which is guaranteed to have $\{\lambda_j\}$ as eigenvalues. When the pencil parameter $L > \frac{K}{2}$, one needs to ensure that there are at least $N$ rows of the matrix $Y$: if not, $\mathfrak{T}$ would be of dimension less than $N$, not giving $N$ eigenvalues. This implies $K \geq N+L-1$. 
	
	The general method for a non-square $Y$ which includes an additional sampling-noise reduction step then goes as follows. The choice for $N$ in the procedure can be varied from its minimal value equal to $d^2-1$ to a larger value, depending on a goodness-of-fit.
	\begin{enumerate}
		\item Construct a singular-value decomposition of the matrix $Y$, i.e.~$Y=R_1 \Sigma R_2^T$ and replace the diagonal matrix $\Sigma$ by a diagonal matrix $\Sigma_{\rm clean}$ with only the largest $N$ singular values. 
Let $Y_{\rm clean}=R_1 \Sigma_{\rm clean} R_2^T$. This step is to reduce sampling noise.
		\item Take the submatrices $G_0$ and $G_1$ of $Y_{\rm clean}$.
		\item Compute $\mathfrak{T}=G_1 G_0^+$ where $G_0^+$ is the Moore-Penrose pseudo-inverse of the matrix $G_0$ so that $\mathfrak{T}$ is a matrix with at most $N$ non-zero eigenvalues.
		\item Compute the eigenvalues of $\mathfrak{T}$: these will be the estimates $\lambda_j^{\rm est}$ of $\lambda_j$ for all $j \in \{1, \ldots N\}$. Formally, the linear matrix pencil is $G_0-\lambda G_1$ and the eigenvalues of this matrix pencil, i.e.~the values where ${\rm det}(G_0-\lambda G_1)=0$, are the $\lambda_j^{\rm est}$. 
	\end{enumerate}

We have first applied this method on the signal $g(k)$ of a randomly chosen single-qubit channel: by varying $K$ and $L$ we want to understand the role of the matrix-pencil parameter $L$ and the choice for a larger $K$.
The results are shown in Fig.~\ref{fig:varyL} (left panel). Note that the chosen $K$'s are quite far above the bound $K \geq N+L-1$ to effectively suppress sampling noise. For each $K$ there is a flat region in $L$ where $\Delta^2$ is roughly constant. In the remainder we will choose $L=K/2$, putting ourselves in the middle of this region. Fig.~\ref{fig:varyL} (right panel) shows how increasing $N_{\rm samples}$ lowers the total variance of the estimated eigenvalues.

An additional processing step is the determination of the (complex) amplitudes $\{A_j\}$. Viewing $g(k)$ as a set of $K+1$ inner products between the vector $(A_1, \ldots, A_N)$ and the linearly-independent vectors $(\lambda_1^k, \lambda_2^k,\ldots, \lambda_N^k)$, it is clear that, given perfect knowledge of $g(k)$, the $\{A_j\}$ are uniquely determined when $K+1 \geq N$. Since $g(k)$ is known with sampling noise, the $\{A_j\}$ can be found by solving the least-squares minimization problem $\min_{A_j} \sum_k |g(k)-\sum_j A_j (\lambda_j^{\rm est})^k|^2$. The optimal values in this minimization $A_j^{\rm est}$ and $\lambda_j^{\rm est}$ together form the reconstructed signal $g^{\rm est}(k)$ and the error is given by 
\begin{equation}
\epsilon^{\rm rms}_N=\left(\frac{1}{K+1} \sum_{k=0}^K |g(k)-g^{\rm est}_N(k)|^2\right)^{1/2}.
\label{eq:rms}
\end{equation}

\subsubsection{Resources}
\label{sec:resources}
It is interesting to consider the amount of experiments that must be done to perform spectral quantum tomography. One must estimate the function $g(k)$ defined in \cref{eq:defgk}. 
This reconstruction process requires running $2^n \times N \times (K+1)$ different experiments and repeating each experiment $N_{\rm samples}$ times. For a single-qubit gate we need $6 (K+1)$ experiments, while for a two-qubit gate we need $60 (K+1)$, see Sec.~\ref{sec:resources} for a comparison with randomized benchmarking. Note that while the number of experiments scales exponentially with qubit number (not surprising for a tomographic protocol), the number of experiments needed for performing spectral tomography on single and two-qubit gates is comparable to the number of experiments that must be performed in randomized benchmarking on one or two qubits (which provides only average gate information). In randomized benchmarking one must sample $M$ random sequences for each sequence length $k\in [0:K]$, yielding $M\times (K+1)$ experiments. This $M$ is independent of the number of qubits~\cite{helsen2017multi}. In experiments $M$ is often chosen between $M\approx 40$~\cite{barends2014superconducting,xue2019benchmarking} at the low end and $M\approx 150$ at the higher end~\cite{ballance2016high}. Values of $K$ reported in randomized benchmarking experiments are also comparable to (or even higher than, see~\cite{barends2014superconducting} where $K\approx 300$ is considered) the values of $K$ used for single and two qubit spectral tomography (see \cref{sec:num}).

	\subsection{Spectral tomography on two superconducting chips}
	\label{sec:num}

%FIGURE 2	
	\begin{figure*}[t]
		\includegraphics[width=1\columnwidth]{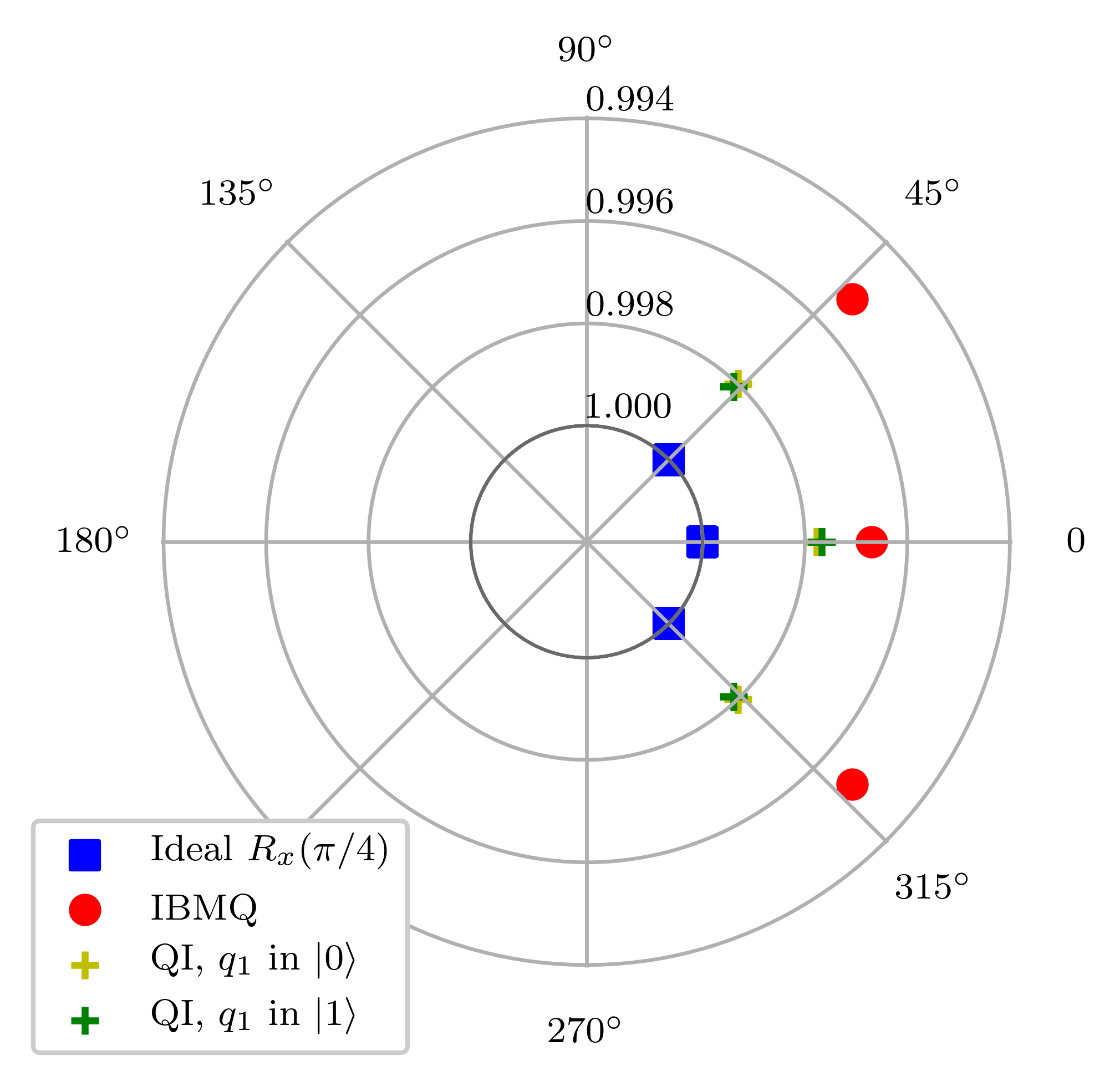}\hspace{1em}
		\includegraphics[width=1\columnwidth]{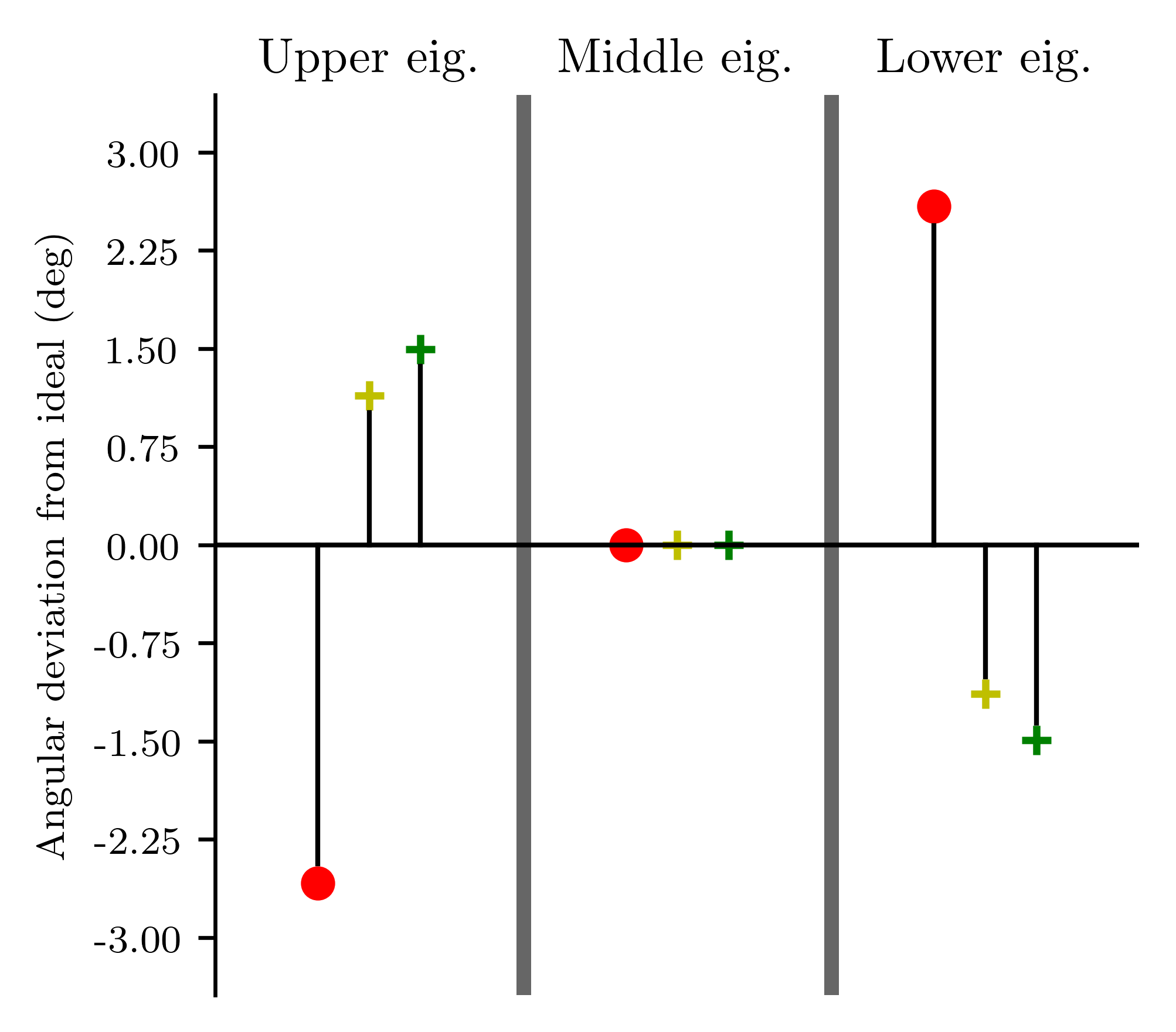}
		\caption{\label{fig:poleplots}{\bf(Left)} Spectral footprints for single-qubit $R_x(\pi/4)$ gates on the {\it ibmqx4} (IBMQ) and the Quantum Infinity (QI) chips at $K=50$, $L=30$ and $N_{\rm samples}=8192$. The modulus of the eigenvalues is plotted in the radial direction and in particular it decreases from the center to the outside and it is equal to 1 on the (most inner) black circumference. The angular coordinate corresponds to the phase of the eigenvalues. {\bf(Right)} Precise value of the deviation of the phases of the three eigenvalues from the ideal ones.}
		\label{fig:pi4}
	\end{figure*}
	
	We have executed the spectral tomography method on a single-qubit $\pi/4$ rotation around the $X$-axis: $R_x(\pi/4)=\exp(-i \pi X/8)$. For this gate the ideal matrix $T^{R_x(\pi/4)}$ should have eigenvalues $1$, $\exp(i \pi/4)$ and $\exp(-i \pi/4)$.
	We execute this gate on two different systems available in the cloud: the two-qubit Quantum Infinity provided by the DiCarlo group at QuTech (for internal QuTech use) and the \emph{ibmqx4} (IBM Q5 Tenerife) available at \url{https://quantumexperience.ng.bluemix.net/qx/editor}. The results of this experiment are shown in Fig.~\ref{fig:pi4} (left panel) in a polar plot which we refer to as the `spectral footprint' of the gate. For clarity, in Fig.~\ref{fig:pi4} (right panel) we have also plotted the phase deviation from ideal for the implemented gates.
	
	On the two-qubit ($q_0$ and $q_1$) Quantum Infinity chip, we perform the single-qubit gate experiment  on $q_0$ twice to study cross-talk: in one case the undriven qubit $q_1$ on the chip is in state $\ket{0}$, in the other case $q_1$ is in state $\ket{1}$. Since the residual off-resonant qubit coupling, mediated through a common resonator, is non-zero, we observe a small difference between these two scenarios, see Fig.~\ref{fig:poleplots}. For the Quantum Infinity chip, when $q_1$ is $\ket{0}$ we estimate $\lambda_j^{\rm est}\in \{0.691+0.719i, 0.691-0.719i, 0.997\}$, while $\lambda_j^{\rm est}\in \{0.687+0.7239i, 0.687-0.724i, 0.998\}$ when $q_1$ is $\ket{1}$. Using the single-qubit fidelity bound given in~\cref{sec:fidel}, we can compute that the fidelity with respect to the targeted gate $R_x(\pi/4)$ can be no more than $0.999$ regardless of the state of $q_1$. We can also compute upper and lower bounds on the unitarity (see \cref{sec:eig} and \cref{sec:fidel}) which yields $0.994\leq u\leq 0.996$ regardless of the state of $q_1$.

	Regarding the {\it ibmqx4} chip, the data are taken when all other qubits are in state $\ket{0}$. The reconstructed eigenvalues $\lambda_j^{\rm est}\in  \{0.735+0.671i, 0.735-0.671i, 0.996\}$ turn out to be lower in magnitude. From these numbers we can conclude that the fidelity to the target gate is no higher than $0.998$ and the unitarity lies between $0.988$ and $0.991$.

	For all these numbers a two-way $95\%$ confidence interval (for both real and imaginary parts) deviates by less than $0.005$ from the quoted values. The confidence intervals are obtained through a Wild resampling bootstrap with Gaussian kernel~\cite{wu1986jackknife}. 
	
	We have considered whether the data can be better fitted with more than $N = 3$ eigenvalues. For each experiment we fit the data using $N$ eigenvalues with $N \in \{4, \ldots 15\}$ and we test whether there is a significant increase in goodness-of-fit using a standard F-test~\cite[Section 2.1.5]{sebernonlinear}. For no experiment and value of $N$ does the resultant $p$-value drop below $0.05$, leading us to conclude that increasing the number of eigenvalues does not significantly increase the accuracy of the fit.

	We have also executed a two-qubit CNOT gate on {\it ibmqx4}~(\cref{fig:CZ-quantum-exp}). The $T$ matrix of the ideal CNOT gate has 15 eigenvalues and a very degenerate spectrum: $6$ eigenvalues are equal to $-1$ and $9$ eigenvalues are equal to $1$, but our data, taking $K=50$, shows that a best fit is obtained using $4$ instead of $2$ eigenvalues. Using an F-test shows that the goodness-of-fit is significantly improved using 4 eigenvalues rather than 2 or 3, whereas adding more eigenvalues beyond $4$ does not significantly improve the goodness-of-fit $(p>0.05)$. 

	We have not tried using larger $K$ (which may lead to a resolution of more eigenvalues) since this would break the requirement that our experiments are executed as a single job performed in a short amount of time on the IBM Quantum Experience. The eigenvalues are $\lambda_j^{\rm est}\in  \{0.939+0.059i,  0.938-0.059i , -0.961+0.067i, -0.961-0.067i\}$, all with a $95\%$ confidence interval smaller than $\pm3\times 10^{-3} $ for both real and imaginary parts.
It is important to note that these 4 eigenvalues, coming in 2 complex-conjugate pairs, {\em cannot be the spectrum} of a two-qubit TPCP map $\mathcal{S}$, for the following reasons. As observed in~\cref{sec:eig}, the submatrix $T^{\mathcal{S}}$ of the Pauli transfer matrix of $\mathcal{S}$ is a real matrix of odd $(4^2-1 =15)$ dimension. Since any complex eigenvalues of a real matrix come in conjugate pairs, $T^{\mathcal{S}}$ must have at least one real eigenvalue. Moreover, the data cannot be explained by allowing for leakage, as any eigenvalues associated to a small amount of leakage must have small associated amplitude, as we discuss in~\cref{sec:nm}. This is not the case for the eigenvalues plotted in~\cref{fig:CZ-quantum-exp} as all their amplitudes have comparable magnitude $A^{\rm est}\in\{ 3.34-1.70i,  3.34+1.70i,  1.57+0.91i,  1.57-0.91i\}$.

 In~\cref{sec:frame} we propose a simple model based on a frame mismatch accumulation that qualitatively reproduces these eigenvalues. This model is not stochastic but coherent, and it violates the assumption that the applied CNOT gate can be fully modeled as a TPCP map. A possible physical mechanism producing a frame mismatch accumulation can be a drift in an experimental parameter.
	
	We do not compute bounds on the fidelity or unitarity of the CNOT gate since the bounds in~\cref{sec:gate_quality_bounds} do not apply when the evolution is non-Markovian.
	
%FIGURE 3	
	\begin{figure}[htb]
		\includegraphics[width=1.0\columnwidth]{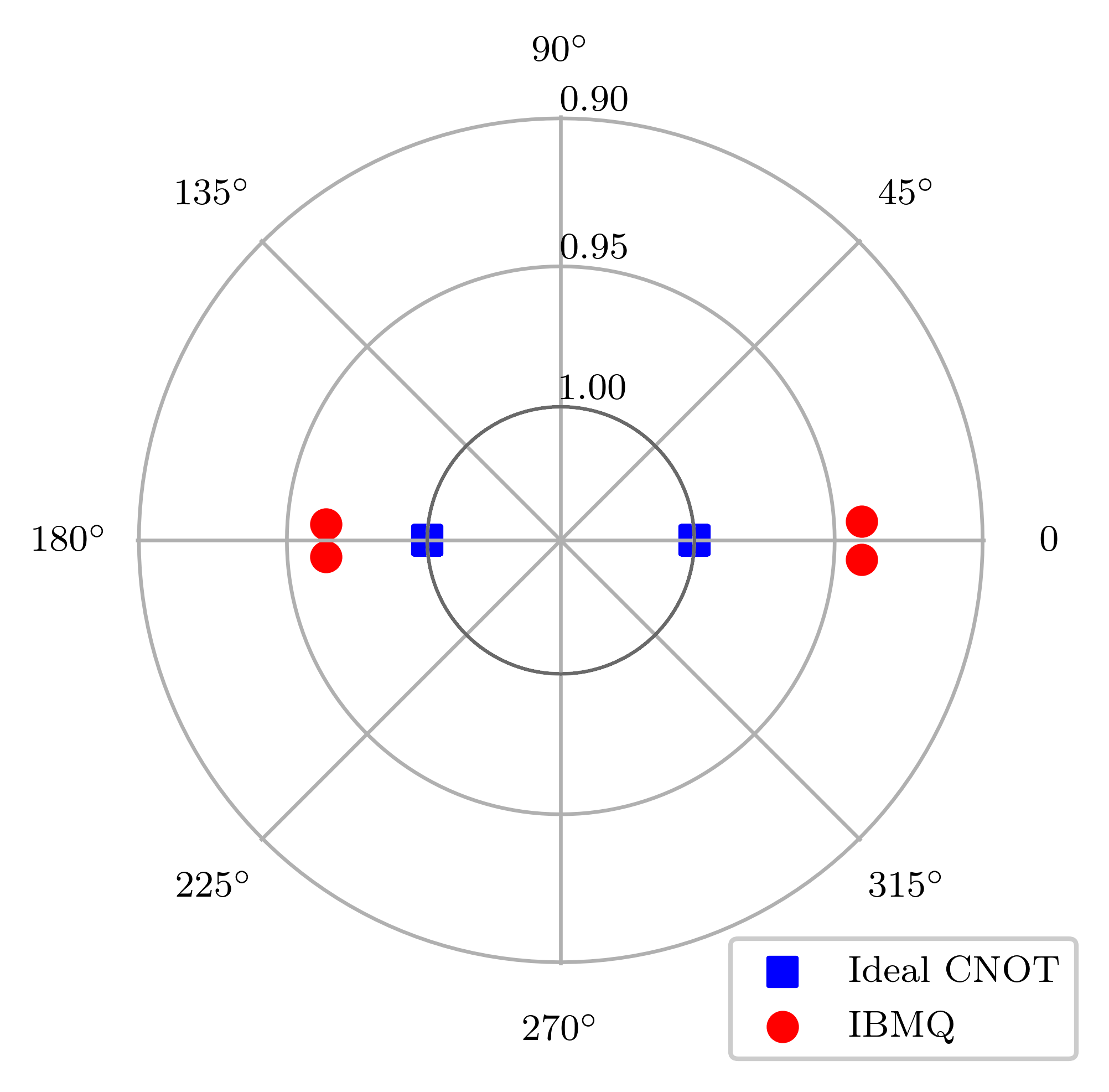}
		\caption{\label{fig:CZ-quantum-exp}Spectral footprint of the CNOT gate for $K=50$ and $N_{\rm samples}=8192$. Even though the CNOT gate has only two (degenerate) eigenvalues, we find that the spectrum of the noisy gate can be best described using $4$ distinct eigenvalues. The fact that none of them are real suggests that the data cannot be due to the repeated execution of the same noisy gate. In~\cref{sec:frame} we propose a simple coherent non-Markovian model that offers a possible mechanism for the absence of real eigenvalues.}
	\end{figure} 
\subsection{Leakage and non-Markovian noise}
	\label{sec:nm}
	In this section we consider how spectral tomography behaves under error models that violate the assumptions that go into~\cref{eq:defgk}. 
	
	Three common mechanisms for gate inaccuracy are (1) cross-talk, meaning the gate depends on or affects the state of other ``spectator" qubits, (2) leakage, meaning that the dynamics of the gate acts outside of the computational qubit subspace and (3) non-Markovian dynamics, meaning that the assumption that $k$ applications of the noisy gate are equal to ${\cal S}^k$ for some TPCP map $\mathcal{S}$ is incorrect. Characterizing gates with respect to these features is important for assessing their use in multi-gate/multi-qubit devices for the purpose of quantum error correction or plainly reliable quantum computing~\cite{WG:leak, rol+:netzero}.
	
	One can see that all three scenarios are due to the dynamics taking place in a larger Hilbert space than the targeted computational qubit space. In the case of leakage, the larger space is an extension of the computational space, while in the other cases the larger space is the tensor product of the computational space with the state space of spectator qubits (1), as explored in~\cref{sec:num}, or other quantum or classical degrees of freedom in the environment (3).

	\subsubsection{Leakage}
	
	Let us consider how gate leakage affects the signal $g(k)$, making the analysis for one or two \emph{qutrits}.  One can choose an operator basis for the qutrit space such as the basis of the 8 traceless (normalized) Gell-Mann matrices $\sigma_\mu^{\rm GM}$ for $\mu=1,\ldots, 8$, together with the normalized identity $\sigma_0^{\rm GM}=\frac{1}{\sqrt{3}} I_3$. 
	For a single qutrit, we can consider the `Pauli' transfer matrix in this Gell-Mann basis, i.e.~$S^{\rm GM}_{\mu \nu}={\rm Tr}[\sigma_{\mu}^{\rm GM} {\cal S}(\sigma_{\nu}^{\rm GM})]$ and its submatrix $T^{\rm GM}$. 
	
	For a single qutrit, the signal $g^{\mbox{\tiny NO SPAM}}(k)$ in Eq.~(\ref{eq:nospam}) then equals ${\rm Tr}_{\rm comp}[(T^{\rm GM})^k]$ where ${\rm Tr}_{\rm comp}[A]$ represents the trace over a $3\times 3$ submatrix of $A$, corresponding to the Gell-Mann matrices which act like X, Y, and Z in the two-dimensional computational space.
	In other words, we can see the matrix $T^{\rm GM}$ as being composed of blocks:
	\begin{equation}
	T^{\rm GM}=\left(\begin{array}{cc} T_{\rm comp} & T_{\rm seep} \\
	T_{\rm leak} & T_{\rm beyond}  \end{array}\right),
	\label{eq:T_GM}
	\end{equation}
	where the upper-left block is the sub-matrix whose trace we take in $g^{\mbox{\tiny NO SPAM}}(k)$. In the absence of other noise sources, $T^{\rm GM}$ corresponds to the evolution of a unitary gate and (assuming it is diagonalizable) it can be diagonalized by a rotation $V$ as $T^{\rm GM}=V D V^{-1}$, where $D$ is a diagonal matrix with all the eigenvalues $\{\lambda_j\}$.
	If we assume that leakage is low, meaning that $T_{\rm leak}$ and $T_{\rm seep}$ have small norm of $O(\epsilon)$, then at lowest order in $\epsilon$ the diagonalizing transformation $V$ will be block-diagonal, i.e.~$V \approx V_{\rm comp} \oplus V_{\rm beyond}$. This means that $g^{\mbox{\tiny NO SPAM}}(k)={\rm Tr}_{\rm comp}\big[ (T^{\rm GM})^k\big]= {\rm Tr}_{\rm comp}\big[ V D^k V^{-1}\big] \approx \sum_{j=1}^3 \lambda_j^k+O(\epsilon)$. Thus, at lowest order, the signal will have large amplitude on 3 relevant eigenvalues of the spectrum of $T^{\rm GM}$ and these eigenvalues could have been perturbatively shifted from their ideal location by low leakage. If the leakage is stronger, we can more generally write 
	\begin{equation}
	g^{\mbox{\tiny NO SPAM, LEAK}}(k)=\sum_{j=1}^8 \tilde{A}_j\lambda_j^k,  \;\tilde{A}_j=\bra{\sigma_j} V^{-1} \Pi_{\rm comp} V \ket{\sigma_j}.
	\label{eq:leak}
	\end{equation}
	Here $\ket{\sigma_j}$ is a vector notation for one of the 8 Gell-Mann matrices $\sigma_j$ and $\Pi_{\rm comp}$ is the projector onto the basis spanned by the 3 Gell-Mann matrices which are the Paulis in the computational space. From this expression we see that the effect of leakage is the contribution of more eigenvalues to the signal $g(k)$. For low leakage we may expect three dominant eigenvalues with relatively large amplitude $\tilde{A}_j$ and five eigenvalues with small amplitude.
	
	For a gate on two qutrits, identical remarks apply, except that an additional basis transformation is required from the orthogonal Gell-Mann basis to the computational qubit Pauli basis in order to keep the same division of~$T^{\rm GM}$ as in~\cref{eq:T_GM}. If we have two qutrits, the 80-dimensional traceless subspace is spanned by the matrices $\sigma_\mu^{\rm GM} \otimes \sigma_\nu^{\rm GM}$ for $\mu,\nu=0,\ldots, 8$ except $\mu=\nu= 0$. 
	The issue is related to terms such as $\sigma_0^{\rm GM} \otimes \sigma_{\nu\neq 0}^{\rm GM}$ since the normalization of the qutrit identity ($\sigma_0^{\rm GM}=\frac{1}{\sqrt{3}} I_3$) is different from the normalization of the qubit identity ($P_0=\frac{1}{\sqrt{2}} I_2$).
	This suggests that for two qutrits it is better to write $T^{\rm GM}$ in a basis which includes the Pauli matrices in the computational subspace ($P_{\mu} \otimes P_{\nu}$ for $\mu, \nu=0,\ldots, 3$ except $\mu=\nu = 0$) as a sub-basis. For two qutrits, the signal may then contain up to 80 eigenvalues of which all but 15 are expected to have small amplitude in case of low leakage.

	\subsubsection{Non-Markovianity: time-correlated noise}
	
	Non-Markovian behavior of a gate can be due to temporal correlations in the classical or quantum environment of the driven qubit(s). Abstractly, we can include the environment in the gate action so that the evolution for each gate application is a unitary given by some $U_{\rm total}$ acting on system and environment. We can expand the Pauli transfer matrix of $U_{\rm total}$ in a Pauli basis for system and environment and view $T_{\rm comp}$ as a sub-block of $T_{\rm total}$, similar as in the case of leakage. Diagonalizing $T_{\rm total}$ and taking the trace over the computational space will result in an expression such as Eq.~(\ref{eq:leak}). For example, an additional spectator or environment qubit can lead to a signal $g(k)$ of a single-qubit gate having contributions from 15 eigenvalues. Choosing a sufficiently large $K$ may allow one to resolve these eigenvalues, even those with small amplitude.
%FIGURE 4	
	\begin{figure}[htb]
		\includegraphics[width=1.0\columnwidth]{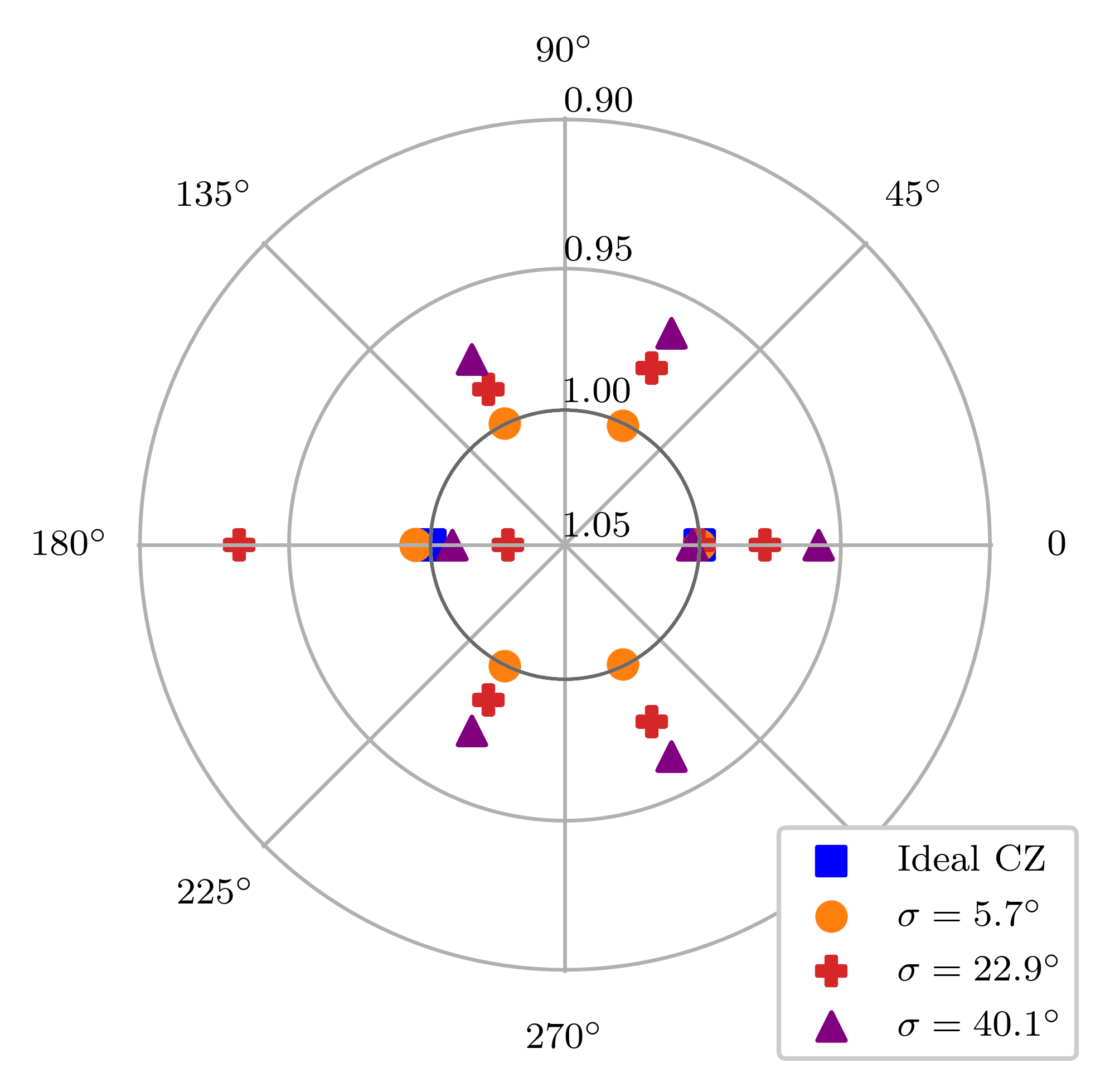}
		\caption{\label{fig:CZ_gaussian} Spectral footprint of a simulated CZ gate affected by non-Markovian noise quantified by $\sigma$, see Eq.~(\ref{eq:CZ_gaussian_integral}). For each $\sigma$ we use an F-test ($p$-value 0.01) to find the number of eigenvalues that best fit the simulated $g^{\mbox{\tiny NO SPAM}}(k)$ with $K=50$. We find respectively 7, 12 and 11 eigenvalues for $\sigma=5.7^\circ, 22.9^\circ, 40.1^\circ$ (here we show only the eigenvalues with modulus greater than 0.9). We observe eigenvalues with modulus larger than 1 if $\sigma$ is sufficiently large. These results are qualitatively stable if we add a small amount of sampling noise.} 
		\label{fig:nonM}
	\end{figure}
	
	A more malicious, but physically reasonable~\cite{rol+:netzero}, form of classical non-Markovian noise makes gate-parameters temporally correlated.
	In order to numerically study the effect of non-Markovian noise, we consider a toy example in which a perfect CZ gate is followed by a rotation around the $X$ axis on one qubit. 
	For a series of $k$ repetitions of a perfect CZ gate, we assume that each one is followed by the same rotation $R_x(\phi)$ acting always on the same qubit. We assume that the angle $\phi$ is Gaussian-distributed with mean 0 and standard deviation $\sigma$: $\mathbb{P}_\sigma(\phi)=\exp(-\phi^2/2\sigma^2)/\sqrt{2\pi}\sigma$.
	The time evolution for $k$ repetitions is then given by
	\begin{equation}
	\mathcal{S}_k(\rho)= \int_{-\infty}^{+\infty} d\phi\, \mathbb{P}_\sigma(\phi) \bigl(R_x(\phi) \,\mathrm{CZ}\bigr)^k \rho \bigl(\mathrm{CZ} \, R_x(\phi)^\dagger \bigr)^k.
	\label{eq:CZ_gaussian_integral}
	\end{equation}
	Note that $\mathcal{S}_k\neq (\mathcal{S}_1)^k$ since this noise is correlated across multiple repetitions of the gate.
	Furthermore, we assume perfect state-preparation and measurement. In this case, one can represent the noisy gate by some unitary $U_{\rm total}$ acting on the two qubits and on a classical state in a Gaussian stochastic mixture of angles $\phi$. The continuous nature of this classical environment state leads to a lack of a hard cut-off on the number of eigenvalues in $g(k)$. 
	
	We apply the matrix-pencil method to the corresponding signal $g^{\mbox{\tiny NO SPAM}}(k)$ and we use an F-test to determine the optimal number of eigenvalues for each~$\sigma$~(\cref{fig:CZ_gaussian}).
	For~$\sigma=22.9^\circ$ and $K=50$ we find eigenvalues with modulus clearly larger than~1.
	Those are unphysical but not excluded by the matrix-pencil method.
	We expect that such $|\lambda^{\rm est}|> 1$ disappear when considering a longer signal, since $g(k)$ does not increase exponentially in $k$. In other words, this is a sign that the signal contains more spectral content than can be resolved from the time scale set by $K$.
	Indeed, for~$\sigma=22.9^\circ$ we have made the same analysis for larger~$K$'s up to $K=200$ and we find that those eigenvalues get closer and closer to~1.
	If instead we fix $K=50$ and consider different~$\sigma$'s, we find that for a low~$\sigma$ (e.g.~$5.7^\circ$) unphysical eigenvalues are not present~(\cref{fig:CZ_gaussian}), whereas for $\sigma>22.9^\circ$ (e.g.~$40.1^\circ$) they get again closer and closer to~1.
	This latter fact can be understood by noting that increasing $\sigma$ is analogous to enlarging the time scale set by~$K$, as the characteristic time scale of dephasing gets shorter for a fixed~$K$.
	Based on these observations, we conclude that there is a certain intermediate time scale at which  eigenvalues larger than~1 are extrapolated from the data in the presence of sufficiently-strong non-Markovian noise of the kind described in this section. \cref{sec:frame} discusses a model with a different kind of time-correlation leading to a spectral footprint which is incompatible with that of a TPCP map.
		
	\subsubsection{Non-Markovianity: coherent revivals}
%FIGURE 5	
	\begin{figure}[htb]
		\includegraphics[width=1.0\columnwidth]{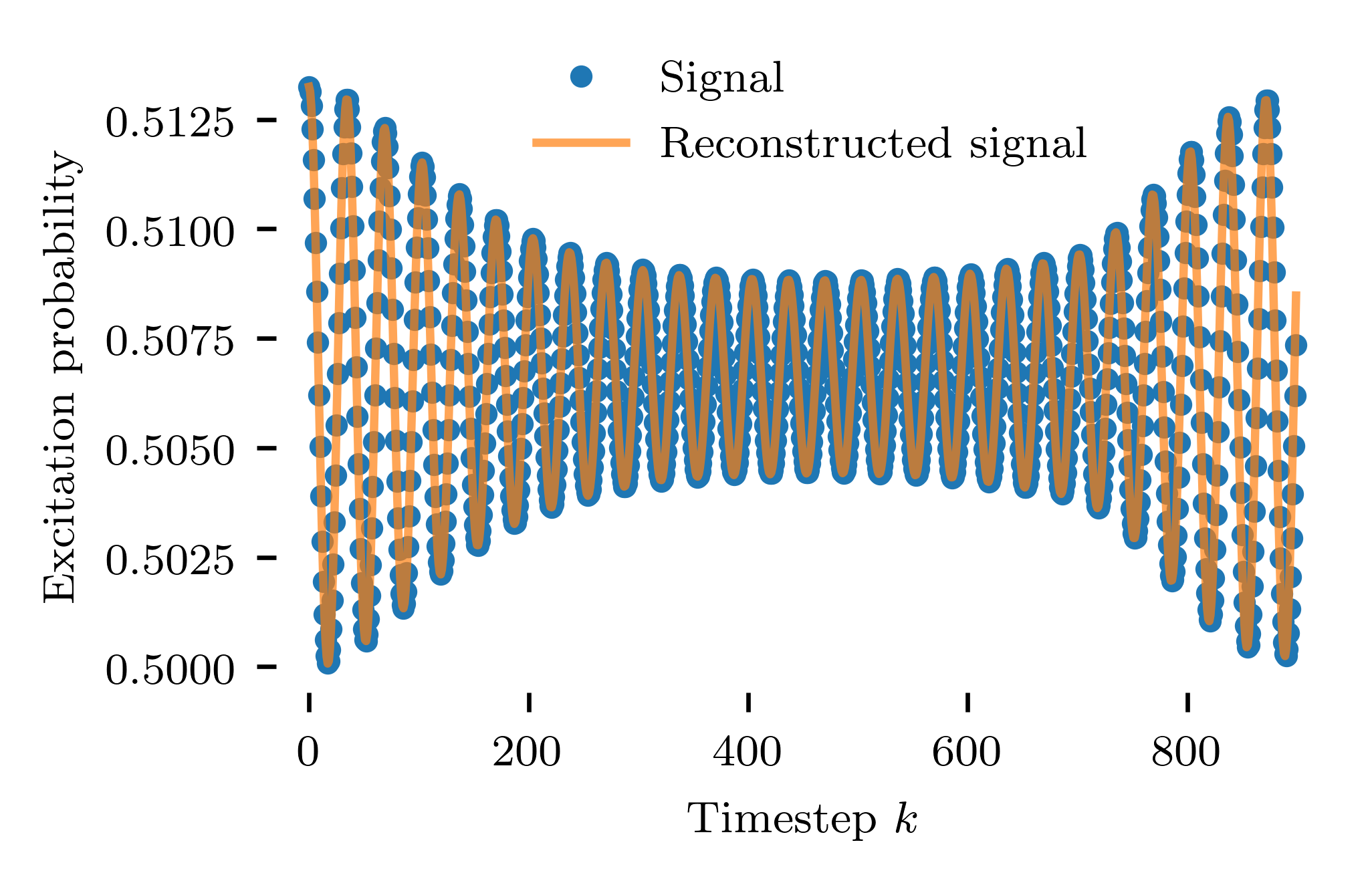}
		\includegraphics[width=1.0\columnwidth]{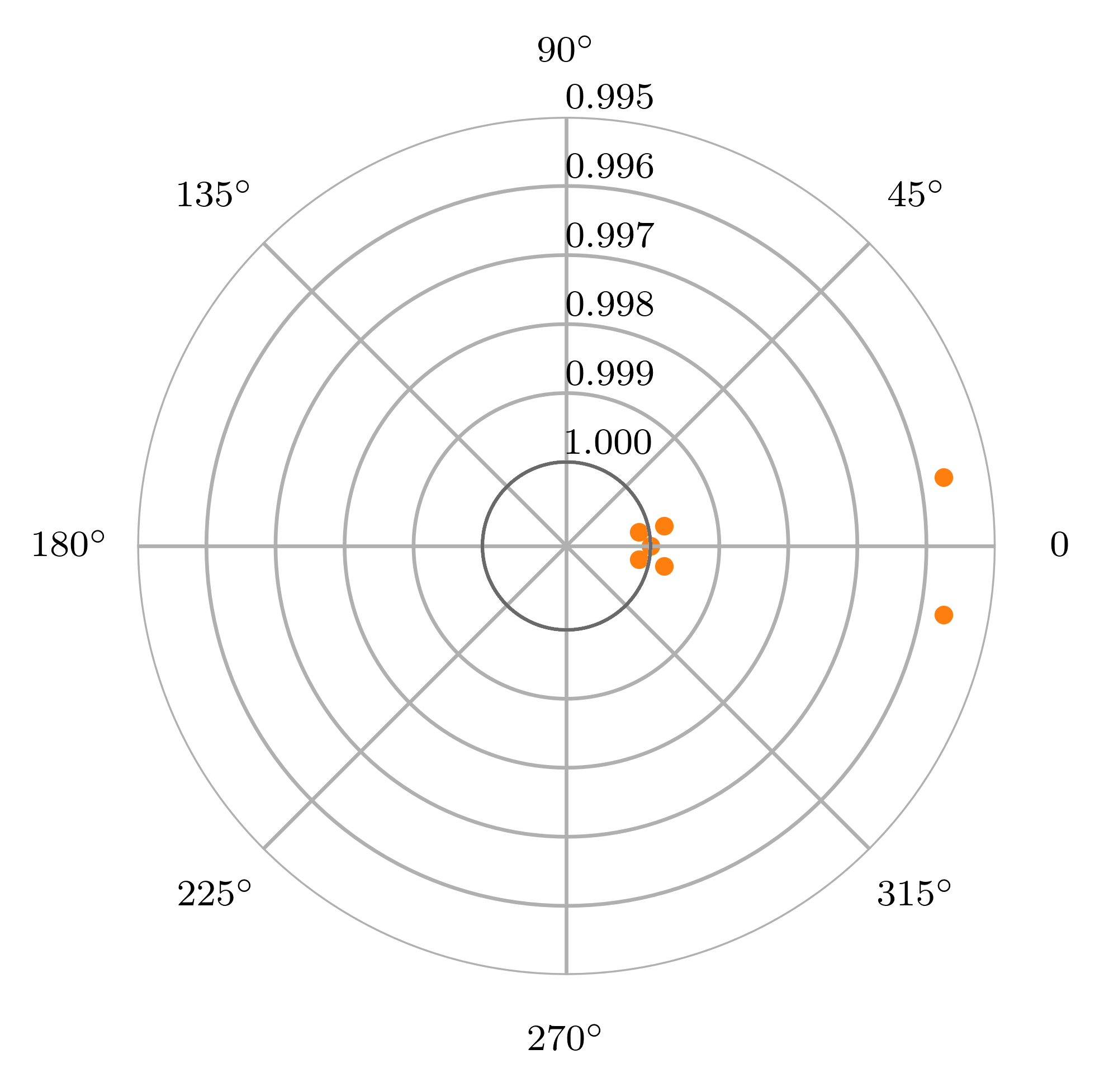}
		\caption{Study of the reviving signal given in~\cref{eq:excitation_probability} for $k\cdot\Omega \delta t=k\cdot 0.05$, $\bar{n}=5$ and $K=900$. We find that the reviving signal is well reconstructed by a fit with 15 eigenvalues, some of which are distinctly separated as can be seen in the spectral footprint. Some of the eigenvalues are estimated to be larger than 1. This is another example in which the matrix-pencil method gives unphysical eigenvalues in the presence of non-Markovian behavior (revivals here, time-correlated parameters in~\cref{fig:nonM}).
			\label{fig:revival}}
	\end{figure}
	
	In order to better understand the occurrence of eigenvalue estimates $|\lambda^{\rm est}| > 1$, we apply the matrix-pencil method on a signal (of a somewhat different physical origin), which has a revival over the time period set by $K$.
	
	It is well-known that in the exchange of energy between a two-level atom with a bosonic mode, the Rabi oscillations of the two-level atom are subject to temporal revivals. These revivals are due to the fact that the bosonic driving field is not purely classical, but rather gets entangled with the state of the qubit via the Jaynes-Cummings interaction. In particular, for a coherent driving field with coherent amplitude $\alpha$ with average photon number $\bar{n}=|\alpha|^2$, the probability for the atom to be excited equals (see Section 3.4.3 in~\cite{book:HR}):
	\begin{equation}
	P_e(t)=\frac{1}{2}+\frac{1}{2}\sum_{n=0}^{\infty} p_{\alpha}(n) \cos (\Omega t \sqrt{n+1}),
	\label{eq:excitation_probability}
	\end{equation}
	with $p_{\alpha}(n)=\exp(-|\alpha|^2) \frac{|\alpha|^{2n}}{n!}$. 
	We consider $\bar{n}=5$ and sample the damped oscillatory function $P_e(t)-\frac{1}{2}$ at regular intervals $k \Omega \delta t$ with $\Omega \delta t=0.05$ and $k=0, \ldots, K=900$. The signal function $g(t)=P_e(t)-\frac{1}{2}$ contains eigenvalues equal to $\lambda_n=\exp(\pm i \, 0.05 \sqrt{n+1})$ with amplitudes according to the Poisson distribution $p_{\alpha}(n)$ with mean photon number $\bar{n}$. 

	We observe that the matrix-pencil method finds eigenvalues larger than $1$, see Fig.~\ref{fig:revival}, which contribute significantly ($p<0.01$ via F-test) to the reconstructed signal. We can understand this feature of eigenvalues exceeding 1 as a way in which the matrix-pencil method handles revivals: the signal has more spectral content than what can be resolved from the window of time given by $K$, in particular there is no hard cut-off on the number of eigenvalues which contribute. We have observed that an analysis of the signal over a longer period of time, that is, a larger $K$ up to $K=5000$, gives eigenvalues whose norm converges to at most 1.
	\section{Discussion}
	We have introduced spectral quantum tomography, a simple method that uses tomographic data of the repeated application of a noisy quantum gate to reconstruct the spectrum of this quantum gate in a manner resistant to SPAM errors. We have experimentally validated our method on one- and two-qubit gates and have also numerically investigated its behavior in the presence of temporally-correlated non-trivial error models.

The effective upshot of leakage and non-Markovian noise is that the signal will have more spectral content than what can be resolved given a chosen sequence length $K$, leading to unphysical features in the spectrum such as an eigenvalue estimate larger than 1, or the absence of a real eigenvalue. Even though we have seen in our examples that a physical spectrum can be regained by going to larger $K$, depending on the noise model, this convergence may be very slow requiring much data-taking time. Hence these unphysical features are useful markers for deviations from our model of repeated TPCP qubit maps ${\cal S}^k$. We view it as an open question how well one can reliably distinguish different sources of deviations.

\subsection{Logical Spectral Quantum Tomography}
\label{sec:qec}

	An interesting application of the spectral tomography method could be the assessment of logical gates on encoded quantum information in a SPAM-resistant fashion. In this logical scenario (for, say, a single logical qubit), one first prepares the eigenstates of the logical Pauli operators $\overline{X}, \overline{Y}$ and $\overline{Z}$. One then applies a unit of error-correction $k=0,\ldots, K$ times: a single unit could be, say, the repeated error correction for $L$ rounds of a distance-$L$ surface code. Or a unit is the application of a fault-tolerant logical gate, e.g. by means of code-deformed error correction or a transversal logical gate followed by a unit of error correction. After $k$ units one measures the logical Pauli operators fault-tolerantly, and repeats experiments to obtain the logical signal $\overline{g}(k)$. Studying the spectral features of such logical channel will give information about the efficacy of the quantum error correction unit and/or the applied logical gate while
departures from the code space or a need to time-correlate syndrome data beyond the given QEC unit  
can show up as leakage and non-Markovian errors.

\section{Methods}
	
	\subsection{Single-qubit case with non-diagonalizable matrix $T$}
	\label{sec:nondiag}
	
	In general, a matrix $T$ can be brought to Jordan normal form by a similarity transformation, i.e.~$T=V J V^{-1}$ with $J=\oplus_i J_i$ where each Jordan block $J_i$ is of the form
	\begin{equation}
	J_i=\left(\begin{array}{ccccc} 
	\lambda_i & 1 & & \\ 
	& \lambda_i & \ddots &  \\
	& & \ddots & 1 \\
	& & & \lambda_i 
	\end{array}\right),
	\end{equation}
	see e.g.~Theorem 3.1.11 in~\cite{book:HJ}. $T$ is diagonalizable when each Jordan block is fully diagonal.

	An example of a non-diagonalizable Lindblad superoperator on a single qubit has been constructed in 
	\cite{SL:nondiag}. Using this, one can easily get a single-qubit superoperator ${\cal S}$ for which the traceless block of the Pauli transfer matrix is a non-diagonalizable matrix $T$ as follows. Let ${\cal S}(\rho)=\exp({\cal L}\epsilon)(\rho)\approx \rho+\epsilon {\cal L}(\rho)+O(\epsilon^2)$ with ${\cal L}(\rho)=-i [\frac{yZ}{2}, \rho]+{\cal D}[(2x)^{1/2} \sigma_-](\rho)+{\cal D}[y^{1/2}X](\rho)$ with ${\cal D}[A](\rho)=A \rho A^{\dagger}-\frac{1}{2}\{A^{\dagger} A,\rho\}$ and real parameters $x,y\geq 0$.
	This implies that ${\cal S}$ has the $4 \times 4$ Pauli transfer matrix 
	\begin{equation*}
	S= \left(\begin{array}{cccc} 1 & 0 & 0 & 0 \\ 0 & 1-\epsilon x & -\epsilon y & 0 \\ 0 & \epsilon y & 1-\epsilon (x+2y) & 0 \\ 2\epsilon x & 0 & 0 & 1-2\epsilon (x+y) \end{array}\right)+O(\epsilon^2).
	\end{equation*}
	Taking some small $\epsilon$ and $x \neq 0$, one can check that the submatrix $T$ does not have 3 eigenvectors and it has a pair of degenerate eigenvalues, so $T$ is not diagonalizable. When we take $x=0$, ${\cal S}$ is unital, that is ${\cal S}(I)=I$, and the submatrix $T$ is not diagonalizable either.

	Even though a matrix $T$ is not always diagonalizable, there still exists the so-called Schur triangular form for any matrix $T$~\cite{book:HJ}. This form says that $T=W (D+E) W^{\dagger}$ with $W$ a unitary matrix, $D$ a diagonal matrix with the eigenvalues of $T$, and $E$ a strictly upper-triangular ``nilpotent" matrix with non-zero entries only above the diagonal. Since the $N \times N$ matrix $E$ is strictly upper-triangular, one has ${\rm Tr}\big[D^iE^j\big]=0$ for all $j\neq 0$. If we use this form in Eq.~(\ref{eq:nospam}), one obtains for any $k$
	\begin{equation}
	g^{\mbox{\tiny NO SPAM}}(k)={\rm Tr}\big[T^k\big]={\rm Tr} \big[(D+E)^k\big]={\rm Tr}\big[D^k\big],
	\label{eq:exp-decay}
	\end{equation}
	since any product of the form $D^{l_1} E^{l_2} D^{l_3} \ldots E^{l_m}$ with some non-zero $l_i > 0$ is a matrix with zeros on the diagonal. In case of SPAM errors and non-diagonalizable $T$ we consider 
	\begin{equation}
	g(k)={\rm Tr}\big[W^{\dagger} T_{\rm prep} T_{\rm meas} W (D+E)^k\big],
	\end{equation}
	where $W^{\dagger} T_{\rm prep} T_{\rm meas} W$ is not the identity matrix due SPAM errors, implying that $g(k)$ can depend on $E$ and have a non-exponential dependence on $k$. Thus, in the special case of a non-diagonalizable matrix $T$, the signal $g(k)$ would not have the dependence on the eigenvalues as in Eq.~(\ref{eq:defgk}).
	
	In particular, we can examine the physically-interesting non-diagonalizable \emph{Case 3} in~\cref{sec:lind} in this light, taking $h_y=h_z=0$ and a critical $\hcritical=\frac{\Gamma_1-\Gamma_2}{2}$. The dynamics of the Lindblad equation after time $t$ induces a superoperator ${\cal S}_t$ which will have the following action on the Pauli operators:
	\begin{eqnarray*}
		{\cal S}_t(X)& =& \exp(-\Gamma_2 t) X, \\
		{\cal S}_t(Y)& =& \exp(-(\Gamma_1+\Gamma_2)t/2)\left[(1+t \hcritical) Y-\hcritical Z\right],\\
		{\cal S}_t(Z)& =& \exp(-(\Gamma_1+\Gamma_2)t/2)\left[\hcritical t Y+(1-\hcritical  t)Z\right].
	\end{eqnarray*}
	Here we can note the linear dependence on $t$ due to the system being critically damped. If we consider the signal $g(t)=\sum_{\mu} {\rm Tr}[P_{\mu} {\cal S}_t(P_{\mu})]$ we see that this linear dependence on $t$ drops out in accordance with Eq.~(\ref{eq:exp-decay}), i.e.~this trace only depends on the eigenvalues and has an exponential dependence on $t$.  In the presence of SPAM errors, some of the linear dependence could still be observable for such critically-damped system. In addition, coefficients such as $c_{\mu \nu}(t)={\rm Tr}[P_{\mu} {\cal S}_t (P_{\nu})]$ can depend linearly on $t$, making such tomographic data less suitable to extract eigenvalue information.

	\subsection{Upper bound on the entanglement fidelity with the targeted gate}
	\label{sec:fidel}

	In this section we show how to relate the eigenvalues of the Pauli transfer matrix of a TPCP  map ${\cal S}$ to an upper bound on the entanglement fidelity (and hence the average gate fidelity) with the targeted unitary gate $U$. Naturally, one can only expect to obtain an upper bound on the gate fidelity, since the eigenvalues do not provide information about the eigenvectors of ${\cal S}$. If the actual eigenvectors deviate a lot from ideal, the actual gate fidelity could be very low, so one can certainly not derive a lower bound on the fidelity based on the eigenvalues.

	\begin{lemma}
		\label{lem:upperbound_fidelity}
		Let the eigenvalues of the $N \times N$ matrix $T^{\cal S}$ be $\{\lambda_i\}_{i=1}^N$ with $N=d^2-1$ for a $d$-dimensional system. Let $U$ be the targeted gate with eigenvalues $\{\lambda_i^{\rm ideal}\}_{i=1}^N$ and let there be permutation $\pi$ of $i$-th eigenvalue $\lambda_i$ which maximizes $|\sum_i \lambda_{\pi(i)}^*\lambda_i^{\rm ideal}|$ so that $0 \leq \xi_{\rm max}=\max_{\pi}\frac{1}{N}|\sum_i \lambda_{\pi(i)}^*\lambda_i^{\rm ideal}|\leq 1$. The entanglement fidelity ${\cal F}_{\rm ent}(U, {\cal S})=\frac{1}{N+1}(1+ {\rm Tr}\big[ T^{U^{\dagger}} T^{\cal S}\big])$ is upper bounded as 
		\begin{equation}\label{eq:fid_bound}
		{\cal F}_{\rm ent}(U, {\cal S})\leq \frac{1}{N+1}\left(1+ N
		\sqrt{u(\mathcal{S})-\frac{\sum_j |\lambda_j|^2}{N}}+N\xi_{\rm max}\right),
		\end{equation}
		where $u(\mathcal{S})$ is the unitarity of $\mathcal{S}$.
	\end{lemma}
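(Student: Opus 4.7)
The plan is to start from $\mathcal{F}_{\rm ent}(U,\mathcal{S})=\frac{1}{N+1}\bigl(1+\mathrm{Tr}[T^{U^\dagger}T^{\mathcal{S}}]\bigr)$ and bound the trace from above. The key device is the Schur triangular form introduced in Appendix~\ref{sec:nondiag}: write $T^{\mathcal{S}}=W(D+E)W^{\dagger}$ with $W$ unitary, $D=\mathrm{diag}(\lambda_1,\ldots,\lambda_N)$, and $E$ strictly upper-triangular. Because $T^U$ is a real orthogonal matrix (the Pauli transfer submatrix of a unitary channel preserves the Hilbert--Schmidt inner product), the matrix $M:=W^{\dagger}T^{U^{\dagger}}W$ is unitary and has eigenvalues $\{(\lambda_j^{\rm ideal})^*\}$. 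Then
\[
\mathrm{Tr}[T^{U^{\dagger}}T^{\mathcal{S}}] = \mathrm{Tr}[MD] + \mathrm{Tr}[ME],
\]
and since the left-hand side is real it is at most $|\mathrm{Tr}[MD]| + |\mathrm{Tr}[ME]|$. I would bound these two terms separately.

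For the nilpotent piece, Cauchy--Schwarz in the Frobenius inner product gives $|\mathrm{Tr}[ME]| \le \|M\|_F\,\|E\|_F = \sqrt{N}\,\|E\|_F$. Unitary conjugation preserves the Frobenius norm and $\mathrm{Tr}[D^{\dagger}E]=0$ since $E$ has zero diagonal, so $\|T^{\mathcal{S}}\|_F^2 = \|D\|_F^2 + \|E\|_F^2$. Recalling $u(\mathcal{S}) = \|T^{\mathcal{S}}\|_F^2/N$, this yields $\|E\|_F^2 = N\,u(\mathcal{S})-\sum_j|\lambda_j|^2$, and therefore $|\mathrm{Tr}[ME]| \le N\sqrt{u(\mathcal{S})-\tfrac{1}{N}\sum_j|\lambda_j|^2}$, matching the unitarity term in the claim.

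For the diagonal piece, $\mathrm{Tr}[MD]=\sum_i M_{ii}\lambda_i$. Diagonalizing $M=V\,\mathrm{diag}(\mu_1,\ldots,\mu_N)\,V^{\dagger}$ with $\mu_i=(\lambda_i^{\rm ideal})^*$, the entries satisfy $M_{ii}=\sum_k|V_{ik}|^2\mu_k$, and the matrix $B_{ik}:=|V_{ik}|^2$ is doubly stochastic. By Birkhoff's theorem $B$ is a convex combination $\sum_{\pi} p_{\pi} P_{\pi}$ of permutation matrices, so
\[
\sum_i M_{ii}\lambda_i = \sum_{\pi} p_{\pi}\sum_i \mu_{\pi(i)}\lambda_i.
\]
Taking absolute values, applying the triangle inequality, and performing a complex conjugation together with a change of summation variable gives $|\mathrm{Tr}[MD]| \le \max_{\pi}\bigl|\sum_i\lambda_{\pi(i)}^{*}\lambda_i^{\rm ideal}\bigr| = N\xi_{\rm max}$. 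Combining both bounds and dividing by $N+1$ delivers Eq.~(\ref{eq:fid_bound}).

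The main obstacle is the diagonal step: one must recognize that the diagonal of a unitary matrix is a doubly-stochastic image of its spectrum, so that Birkhoff's theorem reduces the bound to an optimization over precisely the permutation matchings that define $\xi_{\rm max}$. By contrast the nilpotent bound is a routine Frobenius-norm calculation once the Schur decomposition — already used in Appendix~\ref{sec:nondiag} — is in hand, and the reality of $\mathrm{Tr}[T^{U^{\dagger}}T^{\mathcal{S}}]$ handles the complex-absolute-value step for free.
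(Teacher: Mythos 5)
Your proposal is correct and follows essentially the same route as the paper's proof: Schur triangular form for $T^{\mathcal{S}}$, Cauchy--Schwarz to bound the nilpotent contribution by $\sqrt{N}\,\|E\|_F$ with $\|E\|_F^2=N u(\mathcal{S})-\sum_j|\lambda_j|^2$, and the Birkhoff--von Neumann decomposition of a doubly-stochastic matrix to reduce the diagonal term to $N\xi_{\rm max}$. The only (cosmetic) difference is that you phrase the doubly-stochastic step via the entries $|V_{ik}|^2$ of the diagonalizing unitary, whereas the paper uses $M_{ij}={\rm Tr}\big[P_i W P_j W^{\dagger}\big]$ -- these are the same object.
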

	
	\begin{proof}
		We write $T^{\cal S}$ in Schur triangular form as $T^{{\cal S}}=W (D^{\cal S}+E) W^{\dagger}$ with $W$ a unitary matrix, $D^{\cal S}$ a diagonal matrix with the eigenvalues of $T^{\cal S}$, and $E$ a strictly upper-triangular ``error" matrix with  non-zero entries only above the diagonal~\cite{book:HJ}.  Using the Cauchy-Schwartz inequality one has 
		\begin{eqnarray}
		{\rm Tr}\big[{T^U}^{\dagger} T^{\cal S}\big]\leq {\rm Tr}\big[{T^U}^{\dagger} W D^{\cal S} W^{\dagger}\big]+ \nonumber \\
		({\rm Tr}\big[E^{\dagger}E\big])^{1/2} ({\rm Tr}\big[{T^U}^{\dagger} T^U\big])^{1/2}. 
		\label{eq:bound}
		\end{eqnarray}
		Note that for a unitary gate $U$, $T^{U^\dagger}=(T^{U})^T=(T^{U})^\dagger$ and $T^{U^{\dagger}} T^U=I$ implying that $T$ is an orthogonal matrix with unit singular values. We thus have $({\rm Tr}\big[{T^U}^{\dagger} T^U\big])^{1/2}=\sqrt{N}$.  One has ${\rm Tr}\big[{T^{\cal S}}^{\dagger} T^{\cal S}\big]={\rm Tr}\big[(D^{\cal S}+E)^{\dagger} (D^{\cal S}+E)\big]={\rm Tr}\big[({D^{\cal S}}^{\dagger} D^{\cal S}+E^{\dagger}E)\big]$, using the strict upper-triangularity of $E$. In other words, ${\rm Tr}\big[E^{\dagger} E\big] ={\rm Tr}\big[{T^{\cal S}}^{\dagger} T^{\cal S}\big]-\sum_i |\lambda_i|^2$ where $\lambda_i$ are the eigenvalues of $T^{\cal S}$.

		Recognizing that $\frac{1}{N}{\rm Tr}\big[{T^{\cal S}}^{\dagger} T^{\cal S}\big] =u(\mathcal{S})$,  we obtain an upper bound on the second term in Eq.~(\ref{eq:bound}). 
		
		Now let's upper bound the first term in Eq.~(\ref{eq:bound}) for unknown unitary $W$. 
		Assume w.l.o.g.~that $T^U$ and $D^{\cal S}$ are diagonal in the same basis (the additional rotation between these eigenbases can be absorbed into $W$). Let $T^U=\sum_i \lambda_i^{\rm ideal} P_i$ and $D^{\cal S}=\sum_i \lambda_i P_i$ with orthogonal projectors $P_i$ and $\sum_i P_i=I$.
		Define the matrix $M$ with entries $M_{ij}={\rm Tr}\big[P_i W P_j W^{\dagger}\big]$.  The matrix $M$ is doubly-stochastic, since $\sum_i M_{ij}=1=\sum_j M_{ij}$ which implies that $M=\sum_m q_m \pi_m$ with $q_m \geq 0, \sum_m q_m=1$ (Birkhoff-von Neumann theorem~\cite{book:HJ}) with permutation matrix $\pi_m$. With these facts and the convention $\bra{i} \lambda^{\cal S}\rangle=\lambda_i$ we can bound
		\begin{eqnarray*}
			|{\rm Tr}\big[{T^U}^{\dagger} W D^{\cal S} W^{\dagger}\big]| \leq \sum_m q_m |\bra{\lambda^{\rm ideal}} \pi_m \ket{\lambda^{{\cal S}}}| \leq N \xi_{\rm max}.
		\end{eqnarray*}
These bounds together then lead to Eq.~(\ref{eq:fid_bound}).
	\end{proof}

	An immediate corollary of~\cref{lem:upperbound_fidelity} is 
	\begin{equation}
	{\cal F}_{\rm ent}(U\!, {\cal S})\!\leq\! \frac{1}{N\!+\!1}\!\!\left(\!1\!+\!N
	\sqrt{1-\frac{\sum_j |\lambda_j|^2}{N}}\!+\! N \xi_{\rm max} \!\right),
	\end{equation}
	since $u(\mathcal{S})\leq 1$ for TPCP maps. However, this is in general not a very strong upper bound on the fidelity.

We can do better in the single-qubit case by realizing that there are strong relations between the singular values $\sigma_i$ of $T^{\mathcal{S}}$ and the absolute values of the eigenvalues $|\lambda_i|$ of $T^{\mathcal{S}}$. Ordering both the singular values and the eigenvalue magnitudes in descending order, we have the following (weak Majorization) inequalities for arbitrary matrices
	\begin{align}
	\prod_{i=1}^N \sigma_i &=  \prod_{i=1}^N |\lambda_i|,\\
	\sum_{i=1}^{F} \sigma_i &\geq  \sum_{i=1}^{F} |\lambda_i|, \;\;\;\; F\in \{1,\ldots,N-1\}.
	\end{align}
	For single-qubit channels we can also impose TPCP constraints to the singular values of the channel. In particular we have~\cite[Eq. (4)]{wolf2010inverse}
	\begin{gather}
	\sigma_i\leq 1,\;\; \forall i\in \{0,1,2,3\},\\ \sigma_1+\sigma_2\leq 1+ \sigma_3.
	\end{gather}
	Using these relations we can upper bound the unitarity of a single-qubit channel $\mathcal{S}$, given its eigenvalues, using the optimization:
	\begin{equation*}
	\begin{aligned}
	& \underset{\sigma_1,\sigma_2,\sigma_3}{\text{minimize}}
	& & u(\mathcal{S}) = \frac{1}{3}(\sigma_1^2+\sigma_2^2+\sigma_3^2) \\
	& \text{subject to}
	& & \sigma_1 \sigma_2 \sigma_3 = |\lambda_1||\lambda_2||\lambda_3|, \\
	&&& 1\geq \sigma_1 \geq \sigma_2 \geq \sigma_3 \geq 0,\\
	&&& \sigma_1+\sigma_2\leq 1+ \sigma_3,\\
	&&& \sigma_1+\sigma_2\geq |\lambda_1|+ |\lambda_2|,\\
	&&& \sigma_1+\sigma_2+\sigma_3\geq |\lambda_1|+ |\lambda_2| +|\lambda_3|.\\
	\end{aligned}
	\end{equation*}
	This is a non-convex optimization problem in three variables, for which a global minimum can be numerically computed given $\lambda_1,\lambda_2,\lambda_3$. This gives an upper bound on the unitarity of $\mathcal{S}$ and hence on the entanglement fidelity of $\mathcal{S}$ to the target unitary $U$. In the main text we use this optimization to give non-trivial upper bounds on the fidelities of single-qubit gates realized on superconducting chips and analyzed using the spectral tomography method.

\subsection{Frame Mismatch Accumulation}
	\label{sec:frame}
	In~\cref{sec:num} we noted that the data gathered for the CNOT gate cannot be explained by a model of a noisy TPCP map $\mathcal{S}$ repeated $k$ times. Here we propose a simple coherent model that qualitatively reproduces the features observed in~\cref{fig:CZ-quantum-exp} and we call this the frame mismatch accumulation model. Let $\mathcal{S}_0$ be a TPCP map that is a good approximation of the targeted gate applied exactly once (in the main text this was the CNOT) and let $V$ be some unitary. In the frame mismatch accumulation model we assume that $k$ consecutive applications of the gate are equal to:
	\begin{equation}
	\mathcal{S}_k = \prod_{i=0}^{k}\big(V^{\dagger}\big)^{i} \mathcal{S}_0 V^{i}= (V^{\dagger})^{k+1} (V{\cal S}_0)^k.
	\end{equation}
	Intuitively, this can be interpreted as an increasing mismatch between the frame in which $\mathcal{S}_0$ was defined and the frame in which the gate was implemented at the $i$-th repetition, up to $i=k$.
	
	We apply this model to a CNOT gate, choosing $\mathcal{S}_0$ to be an ideal CNOT gate and choosing $V = \exp(-i \frac{\theta}{2} I~\otimes~Y)$ with $\theta = 0.05$ deg.
	In the case of the cross-resonance CNOT gate performed on \emph{ibmqx4}, this may correspond to an imperfect cancellation of the $I\otimes Y$ term~\cite{Sheldon16}.
	In~\cref{fig:frame cnot} we see that this example closely reproduces the eigenvalues shown in~\cref{fig:CZ-quantum-exp}.
	At the same time, we note that the qualitative features observed in~\cref{fig:frame cnot} do not depend on the choice of the rotation axis of $V$ (for either qubit), as long as the rotation does not commute with $\mathcal{S}_0$ (which would leave the gate unaffected by the frame mismatch).
	%FIGURE 6
\begin{figure}
	\includegraphics[width=1.0\columnwidth]{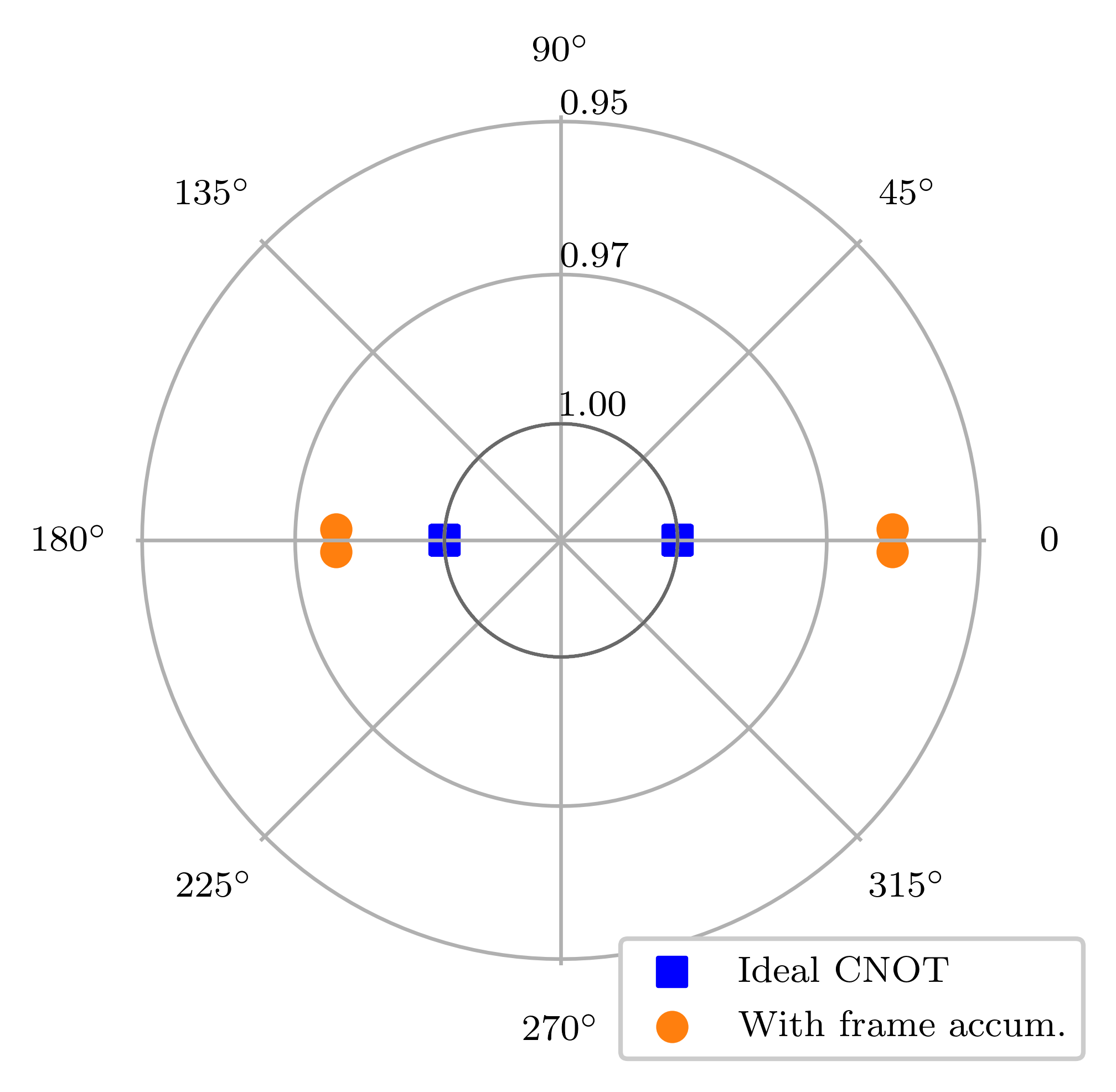}
	\caption{Spectral footprint of a simulated CNOT gate affected by frame mismatch accumulation, for $K=50$. The shown eigenvalues are 
	$\{0.9636+0.03276i,  0.9636-0.0327i, -0.9804+0.0495i , -0.9804-0.0495i\}$, qualitatively matching the experimentally-measured eigenvalues shown in~\cref{fig:CZ-quantum-exp} and, critically, matching the lack of real eigenvalues observed in~\cref{fig:CZ-quantum-exp}.  }
\label{fig:frame cnot}
	\end{figure}	
\section{Data Availability}
	Experimental data gathered for \cref{fig:pi4,fig:CZ-quantum-exp}, as well as an implementation of the matrix pencil algorithm can be found online at \url{https://doi.org/10.5281/zenodo.2613856}.
\section{Competing interests}
		The authors declare no competing interests. The views expressed in this manuscript are those of the authors and do not reflect the official policy or position of IBM or the IBM Quantum Experience Team
\section{Author contributions}
All authors contributed to the development of the theoretical concepts presented. The experiments on the IBM QE and the QuTech QI were performed and analysed by JH and the simulations on non-Markovianity were performed by JH and FB, under supervision of BMT. All authors contributed to the writing of the manuscript.
\section{Acknowledgements}
	The work by FB and BMT was supported by ERC grant EQEC No. 682726. JH is funded by STW Netherlands, NWO VIDI, an ERC Starting Grant and by the NWO Zwaartekracht QSC grant. We thank Andrew Cross for generous access to the IBM Quantum Experience for a TU Delft MSc project which indirectly led to this work, we thank Jarn de Jong for discussions on quantum tomography and Adriaan Rol for assistance with the QuTech Quantum Infinity. JH would also like to thank Arnaud Carignan-Dugas for interesting discussions.
\section{Additional information}
		The authors declare no competing interests. The views expressed in this manuscript are those of the authors and do not reflect the official policy or position of IBM or the IBM Quantum Experience Team.

\clearpage
\bibliographystyle{hunsrt}
\bibliography{tomo-refs.bib}

\end{document}